\DeclareMathOperator{\cor}{cor}
\DeclareMathOperator{\Var}{Var}         
\DeclareMathOperator{\E}{E}    
\newcolumntype{e}{D{.}{.}{9}}
\numberwithin{equation}{section}
\theoremstyle{plain}
\newtheorem{theorem}{Theorem}[subsection]
\newtheorem{proposition}{Proposition}
\newtheorem{corollary}[theorem]{Corollary}
\newtheorem{definition}{Definition}
\numberwithin{equation}{section}
\DeclareMathOperator{\var}{Var}
\providecommand{\keywords}[1]
{
  \small	
  \textbf{\textit{Keywords---}} #1
}
\title{Heritability curves: A local measure of heritability}
\begin{document}

\author{Geir Drage Berentsen\thanks{Department of Business and Management Science, NHH Norwegian School of Economics. Helleveien 30, 5045 Bergen, Norway}, 
 Francesca Azzolini\thanks{Department of Mathematics, University of Bergen, P.O. Box 7803, 5020 Bergen, Norway}, 
 Hans J. Skaug\footnotemark[2], \\
 Rolv T.~Lie\thanks{Department of Global Public Health and Primary Care, University of Bergen, P.O. Box 7804, 5020 Bergen, Norway}
 \thanks{Centre for Fertility and Health, Norwegian Institute of Public Health, Oslo, Norway},
 H\aa kon K. Gjessing\footnotemark[3]
 \footnotemark[4]
 }

\maketitle

\begin{abstract}
This paper introduces a new measure of heritability which relaxes the classical assumption that the degree of heritability of a continuous trait can be summarized by a single number. This measure can be used in situations where the trait dependence structure between family members is non-linear, in which case traditional mixed effects models and covariance (correlation) based methods are inadequate. Our idea is to combine the notion of a correlation curve with traditional correlation-based measures of heritability, such as Falconer’s formula. For estimation purposes, we use a multivariate Gaussian mixture, which is able to capture non-linear dependence and respects certain distributional constraints. We derive an analytical expression for the associated correlation curve, and investigate its limiting behaviour when the trait value becomes either large or small. The result is a measure of heritability that varies with the trait value. When applied to birth weight data on Norwegian mother--father--child trios, the conclusion is that low and high birth weight are less heritable traits than medium birth weight. On the other hand, we find no similar heterogeneity in the heritability of Body Mass Index (BMI) when studying monozygotic and dizygotic twins.
\end{abstract}

\keywords{Correlation curve, Heritability, Multivariate Gaussian mixture, Twin studies}

\section{Introduction}



Biometrical modeling of family trait correlations has a very long
tradition, going back at least to Ronald Fisher~\citep{fisher_xvcorrelation_1919} and Sewall Wright~\citep{wright_relative_1920,wright21}, and being developed into an extensive modeling framework over
 the years~\citep{bulmer_mathematical_1985,Neale2002}, with openly available software tools, such as OpenMx~\citep{neale2016openmx}. For a continuous trait $Y$, such as weight or height, the
basic idea is that trait variability -- or more precisely, the variance
of the measured trait, $\var(Y)$ -- can be decomposed into genetic
and environmental components, each explaining a portion of the
observed trait variance. Thus, the concept of \emph{heritability}
can, loosely, be defined as the proportion of trait variance explained
by genetic components, with environmental influences assumed to explain
the rest~\citep{Hopper2002}. As an example, the most common twin model, known as the
ACE model, decomposes the trait $Y$ into additive genetic effects
(A), common (shared) environment (C), and residual (random) environment
(E). In terms of variances, we commonly define quantities $a^{2}$,
$c^{2}$, and $e^{2}$ as the \emph{proportions} of trait variances
explained by the components A, C, and E, respectively. Thus, assuming
that no other effects are present, we have $a^{2}+c^{2}+e^{2}=1$.

To separate genetic variance from environmental variance, family data
are needed. Genetic correlations between family members decrease in
more distant relationships, thus providing contrasts from which the
genetic components can be estimated. For instance, in the classical
ACE twin design, the additive genetic correlation in monozygotic twin
pairs is assumed to be 1, whereas the corresponding correlation, or
degree of shared genetic influence, is assumed to be 1/2 in dizygotic
twin pairs. In addition, it is frequently assumed that the amount
of shared environment is the same in dizygotic twins as is monozygotic
twins. The quantities $a$ and $c$ above can also be seen as the
degree to which the underlying genes $A$ and shared environmental
$C$ are being ``expressed'' in the phenotype of each individual.
Thus, the monozygotic twin pair phenotype correlation will be $\rho^{(MZ)}=a^{2}+c^{2}$,
and $\rho^{(DZ)}=\frac{1}{2}a^{2}+c^{2}$ for the dizygotic twin pairs.
As a consequence, the difference $\frac{1}{2}a^{2}$ between monozygotic
and dizygotic twin pair correlations is ascribed to genes alone, providing
an estimate of the heritability $a^{2}$.

The ACE model is very specific in its assumption of additive genetic
effects, as well as independent, additive contributions from the environment.
In the biometrical modeling literature, a wide range of variants and
extentions have been developed. Using family structures of increasing
complexity, numerous different effects can be identified, such as
additive genetic effects, dominant genetic effects, X-chromosome effects,
effects of maternal genes on the fetus during pregnancy, effects of
mitochondrial genes, gene-gene interactions, gene-environment interactions,
etc.~\citep{Neale2002,HopperVisscher2002,Gjessing08}. Extending the family structures
used for modeling is in general challenging since genetic correlations
between more distant relatives quickly drop to nearly undetectable
levels, and assumptions about how environmental factors are shared
within larger families become harder to verify~\citep{Gjessing08}.
Still, with a steady increase in registry-based population studies
with large sample sizes and available data on environmental covariates,
such modeling has become feasible.

Common to practically all models in the field is that the degree of
heritability is assumed constant across the full range of the phenotype.
For instance, the estimated proportion $a^{2}$ of variance explained
by additive genes is assumed to be the same whether the phenotype
$Y$ is small, close to its mean, or large. It seems clear, however,
that for instance rare but dramatic environmental influences on the
phenotype may occasionally cause the phenotype to deviate strongly
from its mean value, much more than would be expected under ``normal''
circumstances. Below, we illustrate our models of heritability using
a child's birth weight (BW) as phenotype. While the birth weight distribution
is close to a normal distribution, it has a heavier tail to the
left (Figure~\ref{fig:scatmat}); this may indicate a higher proportion of low birth weight children
than what would be expected from many minor genetic and environmental
components adding up during pregnancy. 

\begin{figure}[!ht] 
        \centering
\includegraphics{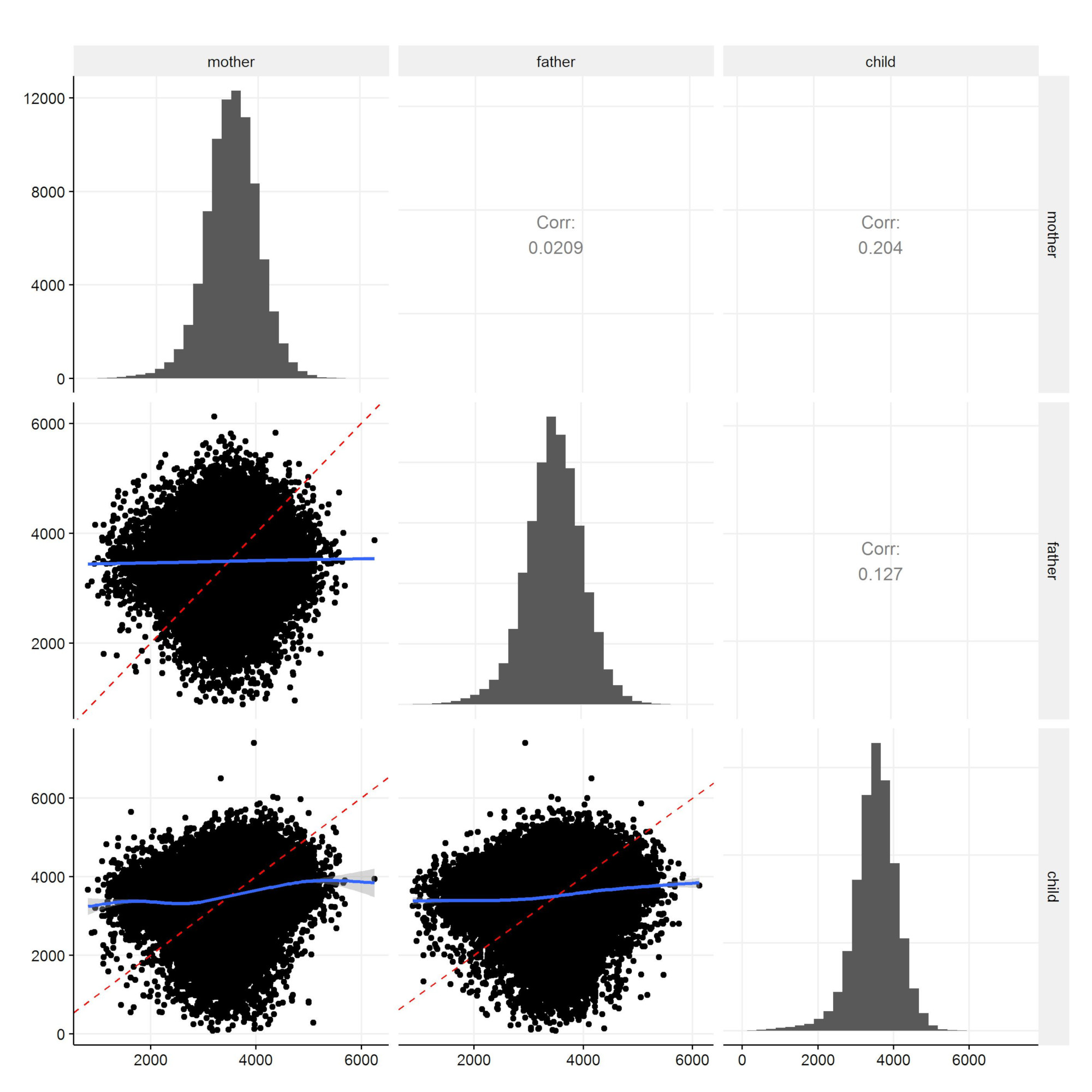}
       \caption{Birth weights (gram) for $81,144$ mother–father–child trios from the
	   Norwegian Birth Registry. Diagonal: histograms of marginal birth weights. Lower triangle: 
	   pairwise scatter plots with estimated nonparametric regression line (blue) and identity line (dashed red),
	   where $y=x$.
	   Upper triangle: pairwise empirical correlation.}
        \label{fig:scatmat}
\end{figure}

This simple observation may
suggest that 
the degree of heritability of birth weight can differ
in the different ranges of weight; perhaps the lowest BW values are
caused by ``rouge'' environmental factors that act more strongly
than genetic effects in the tail, or maybe they are caused by rare,
recessive genes that only occasionally excert a strong negative influence
on BW.

These observations motivate us to look for differences in heritability
across the range of the trait value $Y$.  The existing methods for investigating such differences are almost exclusively based on regression methods. In their seminal work~\citep{defries1985multiple}, DeFries and Fulker evaluate the degree of regression to the mean for co-twins of probands from strata in the tails of a continuous trait distribution. The idea is that if the trait is heritable, then we should observe DZ co-twins with a higher degree of regression to the mean compared to the MZ co-twins. This approach is known as DeFries-Fulker (DF) extremes analysis for twins. Later, a formal test was developed to examine whether the heritability of the trait for probands in the selected strata was equal or different to the unselected population~\citep{defries1988multiple}. This methodology was extended by Cherny et al.~\citep{cherny1992differential} by considering interaction effects between the heritability of the trait and the realized value of the trait for the proband. This approach can be used to detect linear and quadratic changes in heritability as the trait value changes. These methods all have the drawback of only providing a rough description of how the heritability varies with the trait value. The DF approach requires the researcher to select a cut-off point (a low or high trait value) for choosing the strata; the result can thus be misleading if the heritability changes smoothly as the trait value vary. Conversely, if there exists a point in the trait distribution where the heritability jumps and then stabilize again, the Cherny approach will only model this change by a linear or quadratic curve.

These drawbacks were addressed in ~\citep{logan2012heritability} using quantile regression; by using the extended DF extremes analysis ~\citep{labuda1986multiple} as the quantile regression equation, the authors obtain a heritability measure for each quantile of the trait distribution. Consequently, their method results in a heritability measure for each value of the trait $Y = y$, corresponding to a specific quantile of the distribution. 

However, in the present paper we introduce an approach based on localizing traditional genetic models. Informally, this means making sense of estimating, for instance, the additive genetic effect as
a function of the phenotype; i.e.~to define meaningfully $a^{2}(y)$
as the proportion of phenotype variance explained by additive genetic
effects, conditional on $Y=y$. Such a definition may seem
self-contradictory since one conditions on the variable whose variance
is being decomposed. Nevertheless, it is fully possible to make sense
of this concept, and we show in this paper how to develop \emph{heritability
curves}, such as $a^{2}(y)$. This definition thus provides a ``local''
measure of heritability, depending on the phenotype value.

As for the ACE twin model, all standard biometrical models rely on
the phenotype correlations between family individuals to estimate
the variance components that determine heritability. Our starting
point for developing a local measure of heritability is thus a local
measure of dependence between family members; more specifically, we
need a local measure of correlation. There are several local measures
proposed in the literature, such as the local Gaussian correlation
~\citep{Tjostheim2013}, the dependence function ~\citep{Holland1987},
and the correlation curve ~\citep{bjerve93}. We base our approach on
the correlation curve ~\citep{bjerve93} $\rho(y)$, which can be defined
as a measure of locally explained variance, and thus fits the framework
of heritability as a proportion of explained variance. The correlation
curve is similar to the traditional Pearson's correlation in that
it takes values between minus one and one, and the square $\rho^{2}(y)$
is a measure of locally explained variance. In a bivariate Gaussian
distribution, the correlation curve is constant (independent of $y$),
and equal to the standard Pearson correlation. In contrast to the
Pearson correlation the local correlation of a bivariate
relationship depends on direction; for a bivariate random variable
$(Y_{1},Y_{2})$, the locally explained variance of $Y_{2}$ conditional
on $Y_{1}=y$ may differ from the locally explained variance of $Y_{1}$
conditional on $Y_{2}=y$.

With phenotype measurements on, for instance, a mother ($Y_{1}$)
and her child ($Y_{2}$), it may seem reasonable, for instance, to
study the distribution of a child phenotype conditionally
on the maternal phenotype. However, most biometrical models are formulated
in terms of genetic and environmental factors \emph{shared} by the
two family members, thus assuming a form of exchangeability between
the two. This is particularly clear in twin pairs, where conditioning
one twin on the other twin is unnatural. In the model of ~\citep{logan_heritability_2012} this assignment was done randomly, while ~\citep{cherny_differential_1992} explored both a random assignment and a double-entry approach. However, the population value of the
correlation curve can be derived from the joint distribution of two
variables. If the joint distribution is exchangeable, so that $(Y_{1},Y_{2})$
has the same bivariate distribution as $(Y_{2},Y_{1})$, the correlation
curve is invariant to which variable we condition on, i.e.~whether
we measure the locally explained variance of $Y_{1}$ conditional
on $Y_{2}$ or vice versa. This means that the role of the mother
and child in the above interpretation can be interchanged.

The correlation curve may be estimated parametrically or non-parametrically
from observed values of a bivariate distribution $(Y_{1},Y_{2})$
by conditioning on either $Y_{1}=y$ or $Y_{2}=y$. However, our approach
is instead to first model the bivariate distribution as a Gaussian
mixture distribution, where the mixture distribution is restricted
in such a way as to be exchangeable. From the mixture distribution,
the correlation curve can be derived explicitly. We estimate the distribution
by maximum likelihood, and by allowing a sufficient number of components,
a mixture distribution is very flexible and fits a wide range of distributional shapes.
Having obtained the parameters of the mixture distribution, the correlation
curve can be derived from its explicit expression by plugging in the
estimated parameters. 

The paper is structured as follows. In Section \ref{section:Devel},
we define a standard mixed-effect model for continuous traits, and
structure it for two specific family models: twin pairs and mother--father--child
trios. Following a standard twin approach~\citep{falconer60}, and 
models for family trios~\citep{Magnus01,lunde_genetic_2007},
we derive expressions for the heritability estimates in both family
structures. In Subsections \ref{section:corr} and \ref{section: Her},
we explain the concept of correlation curves, and extend the traditional
definition of heritability to the heritability curve, which depends
on the trait value $y$. In Section \ref{section:Gauss}, we introduce
and analyze a Gaussian mixture~\citep{mclachlan2000fmm} for bivariate
phenotype distributions, parameterized to be exchangeable.
We then study the limiting behaviour of the correlation curve for large and small phenotype values under
this model in Subsection \ref{sec:asym}. Lastly, in Subsection \ref{sec:est},
we discuss the estimation of the correlation curve for the twin-pairs
and the mother--father--child trios models. Section \ref{section:Appl}
provides two applications of this approach. Namely, the first application
is the analysis of BMI values for twin pairs collected in the dataset
``twinData", found in the R-package "OpenMx"
\citep{neale2016openmx}; the second one is
the analysis of birth weight data of mother--father--child
trios from the Medical Birth Registry of Norway. For both family structures
we compute AIC and BIC values to select the best-fitting mixture models,
and explore the resulting distributions and heritability curves. 
Proofs are provided in an appendix.


\section{Development of Heritability curves}

\label{section:Devel} 



\subsection{Traditional models for twins and family trios}

 We first provide a basic description of how traditional biometrical
models can be set up in some generality, and in particular for twins
and family trios. While there are numerous ways of building, parametrizing,
and interpreting such models, our approach is fairly standard, and
in a form that supports our development of heritability curves. Let
$Y_{ij}$ be the trait value of individual $j$ in a family $i$,
and consider the mixed-effect model (see e.g. ~\citep{mcculloch01})
\begin{equation} \label{eq:mixmod1}
Y_{ij}=\mu+\beta^{t}x_{ij}+A_{ij}+C_{ij}+D_{ij}+E_{ij},
\end{equation}
where $A_{ij}$, $C_{ij}$, $D_{ij}$ and $E_{ij}$ represent additive
genetic, common environmental, dominant genetic, and residual environmental
random effects, respectively (see e.g. ~\citep{falconer60}). We assume
the four components $A_{ij}$, $C_{ij}$, $D_{ij}$ and $E_{ij}$
to be mutually independent, with mean $0$ and variances $\sigma_{A}^{2}$,
$\sigma_{C}^{2}$, $\sigma_{D}^{2}$ and $\sigma_{E}^{2}$. The inclusion
of the term $\beta^{t}x_{ij}$ (fixed effects) allows the average
phenotype level to depend on covariates. Note that this model assumes
no gene-environment interaction. In traditional biometrical modelling
(see e.g. ~\citep{Gjessing08}) the random effects are assumed to be
normally distributed with expectation $0$, i.e. $A_{ij}\sim N(0,\sigma_{A}^{2})$,
$C_{ij}\sim N(0,\sigma_{C}^{2})$, $D_{ij}\sim N(0,\sigma_{D}^{2})$
and $E_{ij}\sim N(0,\sigma_{E}^{2})$. The assumption of normality
is seen as natural based on the central limit theorem if $Y$ is the
result of numerous small, independent genetic and environmental effects
that add up to produce the trait value. Under the above assumptions the
total variance of the trait is given by 
\begin{equation}
\sigma^{2}=Var(Y_{ij})=\sigma_{A}^{2}+\sigma_{C}^{2}+\sigma_{D}^{2}+\sigma_{E}^{2}. \label{eq:varcomp}
\end{equation} 
We define $a^{2}=\sigma_{A}^{2}/\sigma^{2}$, $c^{2}=\sigma_{C}^{2}/\sigma^{2}$,
$d^{2}=\sigma_{D}^{2}/\sigma^{2}$, and $e^{2}=\sigma_{E}^{2}/\sigma^{2}$
as the proportions of the total variance that derive from each of
the four genetic and environmental components. Note that

\[
a^{2}+c^{2}+d^{2}+e^{2}=1,
\]
i.e.~the contributions from all components sum to one. Thus, in a
model including $A$, $C$, and $E$, excluding dominant effects, one
may quantify the genes-versus-environment contribution to trait variability
as $a^{2}$. This proportion is often referred to as \textit{heritability}
and can be interpreted as how strongly the genetic effect $A_{ij}$
contributes to the trait value. The heritability based on the additive
genetic component is often referred to as \emph{narrow sense heritability}.
Some models may also include dominant genetic effects, and in such
cases one may refer to $a^{2}+d^{2}$ as the \emph{broad sense heritability}~\citep{khoury_fundamentals_1993}.

From independent observations of $Y_{ij}$ alone, it is not
possible to identify the individual variance components $\sigma_{A}^{2}$,
$\sigma_{C}^{2}$, $\sigma_{D}^{2}$, and $\sigma_{E}^{2}$ in \eqref{eq:varcomp},
only the total variance $\sigma^{2}$. In order to make the individual
variances identifiable, one has to consider data on family members,
for which the $Y$'s are correlated due to shared genetic material
and environment. We focus on two basic family structures --- mother--father--child
trios and twin pairs --- in the following. As is well known, these
family structures are quite restricted in the number of effects they
allow to be estimated, and assumptions have to be made about what
genetic and environmental effects to include in each model. In the
following, we will present the specific models that will serve as
illustrations when developing heritability curves.

\subsubsection{Twins}\label{subsubsec:twins}

Perhaps the best known biometrical model is the ACE model for twins,
complemented by the alternative ADE model. While the expressions for
twin correlations in these models are very well known, we state them
here as a starting point for the heritability curves.

Let $Y_{ij}$ be the trait value of twin $j$ ($j=1,2$) in twin-pair $i$. Let
$\rho^{(MZ)}$ and $\rho^{(DZ)}$ be the phenotype correlations $\cor(Y_{i1},Y_{i2})$
for MZ and DZ twins, respectively. Both ACE and ADE models include
the additive genetic component $A$. For MZ-twins $\cor(A_{i1},A_{i2})=1$,
while for DZ-twins $\cor(A_{i1},A_{i2})=1/2$. In the standard ACE
model, the correlation for the common environmental effect is assumed
to be $\cor(C_{i1},C_{i2})=1$ in all twin pairs; thus, one makes
the common assumption of DZ twins sharing their environment to the
same degree as the MZ twins. In the alternative ADE one assumes $\cor(D_{i1},D_{i2})=1$
for MZ twins and $\cor(D_{i1},D_{i2})=1/4$ for DZ twins. In both
models, residual environmental effects are assumed to be independent.

Since the basic twin models utilize only the $\rho^{(MZ)}$ and $\rho^{(DZ)}$
phenotype correlations, they allow estimating two parameters. In addition,
$e{{}^2}$ can be estimated from $e^{2}=1-a^{2}-c^{2}-d^{2}$. The
ACE model assumes $d^{2}=0$, and thus the parameters $a^{2}$, $c^{2}$,
and $e^{2}$ can be identified; the ADE model assumes $c^{2}=0$,
and thus the parameters $a^{2}$, $d^{2}$, and $e^{2}$ can be identified. 

For the ACE model, it follows from the above that
\begin{eqnarray*}
\rho^{(MZ)} & = & a^{2}+c^{2},\\
\rho^{(DZ)} & = & \frac{1}{2}a^{2}+c^{2}.
\end{eqnarray*}
For the ADE model, the equations are
\begin{align*}
\rho^{(MZ)} & =a^{2}+d^{2},\\
\rho^{(DZ)} & =\frac{1}{2}a^{2}+\frac{1}{4}d^{2}.
\end{align*}
The simplest approach to estimating $a^{2}$, $c^{2}$, and $d^{2}$
is by moment estimators, i.e.~to solve this set of equations, using
empirical values for $\rho^{(MZ)}$ and $\rho^{(DZ)}$, and use $e^{2}=1-a^{2}-c^{2}-d^{2}$
to estimate $e^{2}$. The resulting solutions for the ACE model are
the celebrated formulas of Falconer:~\citep{falconer60}
\begin{align} \label{eq:Falconer} 
 & a^{2}=2(\rho^{(MZ)}-\rho^{(DZ)}), \nonumber \\
 & c^{2}=2\rho^{(DZ)}-\rho^{(MZ)}, \\
 & e^{2}=1-\rho^{(MZ)}.\nonumber 
\end{align} 
For the ADE model, the corresponding set of solutions are
\begin{align} \label{eq:ADE_moment}
a^{2} & =4\rho^{(DZ)}-\rho^{(MZ)},\nonumber \\
d^{2} & =2(\rho^{(MZ)}-2\rho^{(DZ)}),   \\
e^{2} & =1-\rho^{(MZ)}.\nonumber 
\end{align} 
Without further assumptions, an informal choice between the ACE and
ADE models is often made based on whether empirically $\rho^{(MZ)}<2\rho^{(DZ)}$ or
not. If this is the case, the ACE model is a natural choice; otherwise,
the ADE model can be used.

\subsubsection{Mother-father-child trios}

\label{sec:trios} Let $Y_{ij}$ be the observed trait value of individual
$j$ in nuclear family trio $i$. We let $j=1,2,3$ correspond to
the mother, father, and child, respectively. A phenotype correlation
between mother and father may signify, for instance, assortative mating,
inbreeding, or social homogamy among the parents. 
However, the correlation is typically low, and we will here assume it is zero~\citep{Magnus01}. There are thus only two correlations that provide information: the mother-child and
father-child correlations. There are numerous ways of parametrizing correlations in 
nuclear families~\citep{Magnus01,pawitan2004,lunde_genetic_2007,Gjessing08,rabe-hesketh_biometrical_2008}, but being restricted to two correlations 
means that these cannot be separated.
In our setting, we assume, for additive autosomal genes,
that $\cor(A_{i1},A_{i3})=\cor(A_{i2},A_{i3})=1/2$, and that $\cor(A_{i1}A_{i2})=0$
for the parents. Also, we assume that mother and child
share an environmental component, but no such sharing between father and child,
leading to $\cor(C_{i1},C_{i3})=1$ and $\cor(C_{i2},C_{i3})=0$.
Thus, 
\[
\Sigma_{A}=\sigma_{A}^{2}\begin{bmatrix}1 & 0 & 1/2\\
0 & 1 & 1/2\\
1/2 & 1/2 & 1
\end{bmatrix},\quad\Sigma_{C}=\sigma_{C}^{2}\begin{bmatrix}1 & 0 & 1\\
0 & 1 & 0\\
1 & 0 & 1
\end{bmatrix},\textrm{ and }\Sigma_{E}=\sigma_{E}^{2}\begin{bmatrix}1 & 0 & 0\\
0 & 1 & 0\\
0 & 0 & 1
\end{bmatrix}
\]
are the covariance matrices for the vectors $(A_{i1},A_{i2},A_{i3})$,
$(C_{i1},C_{i2},C_{i3})$, and $(E_{i1},E_{i2},E_{i3})$, respectively.

A graphical representation of the above model is displayed in a path diagram in Figure \ref{fig:pathdiagram}.

 Under the above assumptions the vectors $(Y_{i1},Y_{i2},Y_{i3})$
are \emph{i.i.d. }multivariate normal with mean

 \begin{equation} \label{eq:multmean}
 (\mu+\beta^{t}x_{i1},\mu+\beta^{t}x_{i2},\mu+\beta^{t}x_{i3}) 
 \end{equation} 
and covariance matrix 
\begin{equation}
\Sigma=\Sigma_{A}+\Sigma_{C}+\Sigma_{E}=(\sigma_{A}^{2}+\sigma_{C}^{2}+\sigma_{E}^{2})\begin{bmatrix}1 & 0 & \frac{1}{2}a^{2}+c^{2}\\
0 & 1 & \frac{1}{2}a^{2}\\
\frac{1}{2}a^{2}+c^{2} & \frac{1}{2}a^{2} & 1
\end{bmatrix},\label{eq:covtot}
\end{equation}
where $a^{2}$, $c^{2}$, and $e^{2}$ are defined as above. Again,
the unknown values can simply be estimated by the methods of moments
by matching the correlation matrix~(\ref{eq:covtot}) to its empirical
counterpart, and solve for $a^{2}$, $c^{2}$ and $e^{2}$ under the
constraint $a^{2}+c^{2}+e^{2}=1$. The solution is given by the following equations
\begin{eqnarray}
a^{2} & = & 2\rho^{(FC)}\nonumber \\
c^{2} & = & \rho^{(MC)}-\rho^{(FC)}  \label{eq:ACE_MFC}  \\
e^{2} & = & 1-\rho^{(MC)}-\rho^{(FC)},\nonumber
\end{eqnarray}
where $\rho^{(MC)}$ and $\rho^{(FC)}$ are the mother-child and father-child
correlations, respectively.

We will, in the following, use these solutions, and those for the ADE
twin model, to obtain local versions of $a^{2}$, $c^{2}$, $d^{2}$,
and $e^{2}$. Note that in both cases, the underlying assumption
is that the covariance (correlation) matrix completely characterizes
the dependence structure between traits in a family and can be decomposed
as in e.g. \eqref{eq:covtot}.

\begin{figure}[H] 
        \centering
\includegraphics{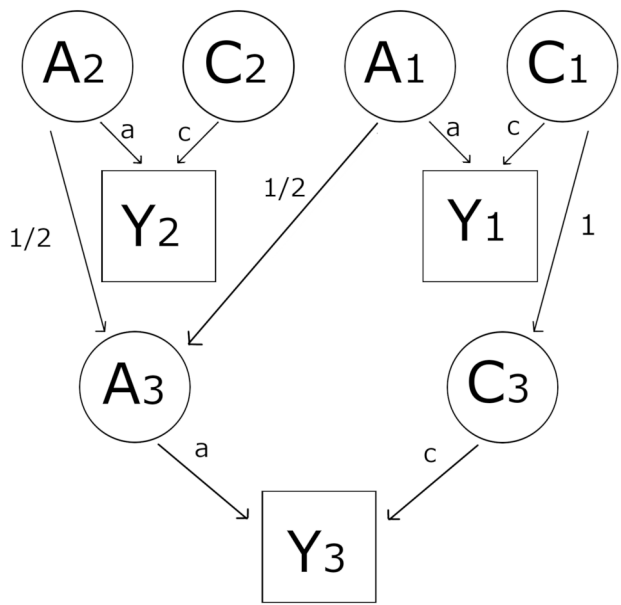}
       \caption{ Path diagram representing the birth weight of mother $Y_1$, father $Y_2$, and child $Y_3$ (represented as squares). The traits are determined by the unobserved genotype values ($A$) and 
environmental values ($C$) (shown as circles), as well as the independent residual environmental values ($E$) (not shown).}
        \label{fig:pathdiagram}
\end{figure}

\subsection{Correlation curves for non-linear bivariate relationships}

\label{section:corr} 

We now explain the concept of local correlation curves, following
the approach of Bjerve and Doksum~\citep{bjerve93}. To illustrate the principle
of localization, we use simulated data from a hypothetical phenotype,
as seen in Figure \ref{fig:stratfig}.

 We consider two strata (A and
B) consisting of all mother-child pairs for which the mother's trait
$Y_{1}=y_{1}$ falls within two intervals (interval A and B) on the
x-axis. The corresponding correlation curve is shown in Figure
\ref{fig:corfig}; as a function of $y_{1}$ (horizontal axis) it is smaller in 
stratum A than in stratum B. This indicates that the mother-child
association is stronger in stratum B compared to stratum A. In a non-parametric
regression setting, this would mean that the child's trait can be
predicted by the mother's trait with higher precision in stratum B
than in stratum A. For both strata, an increase in the mother's
trait is associated with an increase in the child's trait since the
correlation curve is positive. Since the correlation curve is continuous,
the location argument $y_{1}$ can be seen as the center of infinitesimal
intervals from which strata such as A and B can be constructed, while
the value of the correlation curve is a measure of dependence for
the corresponding strata. A constant correlation curve indicates that
the dependence properties are constant across these strata, while a
varying correlation curve indicates strata that differ in their dependence
properties.

\begin{figure}[H]
\caption{Illustration of the concept of a correlation curve and the role
of exchangeability using simulated data. Strata A and B include all mother-child pairs for which the mother's trait value falls in the intervals $[1,2]$ and $[9, 10]$, respectively. Strata $A^*$ and $B^*$ include all mother-child pairs for which the child's trait value falls in the same intervals.}

\captionsetup[subfigure]{font = large}
\begin{subfigure}[T]{0.45\linewidth}
       \centering
\includegraphics[width = \linewidth]{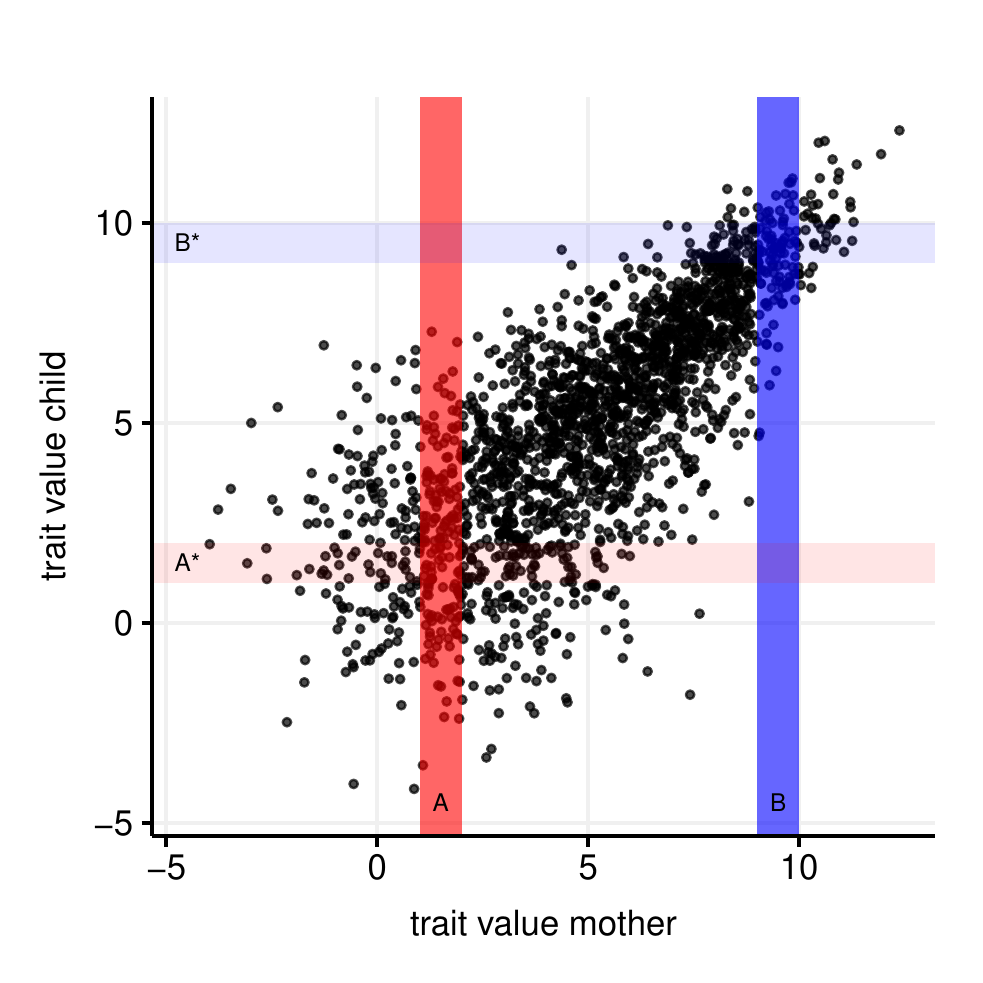}
      \subcaption{Simulated data from an exchangeable Gaussian mixture, 
	  and the definition of strata.}
       \label{fig:stratfig}
\end{subfigure}
\quad
\begin{subfigure}[T]{0.45\linewidth}
       \centering
\includegraphics[width = \linewidth]{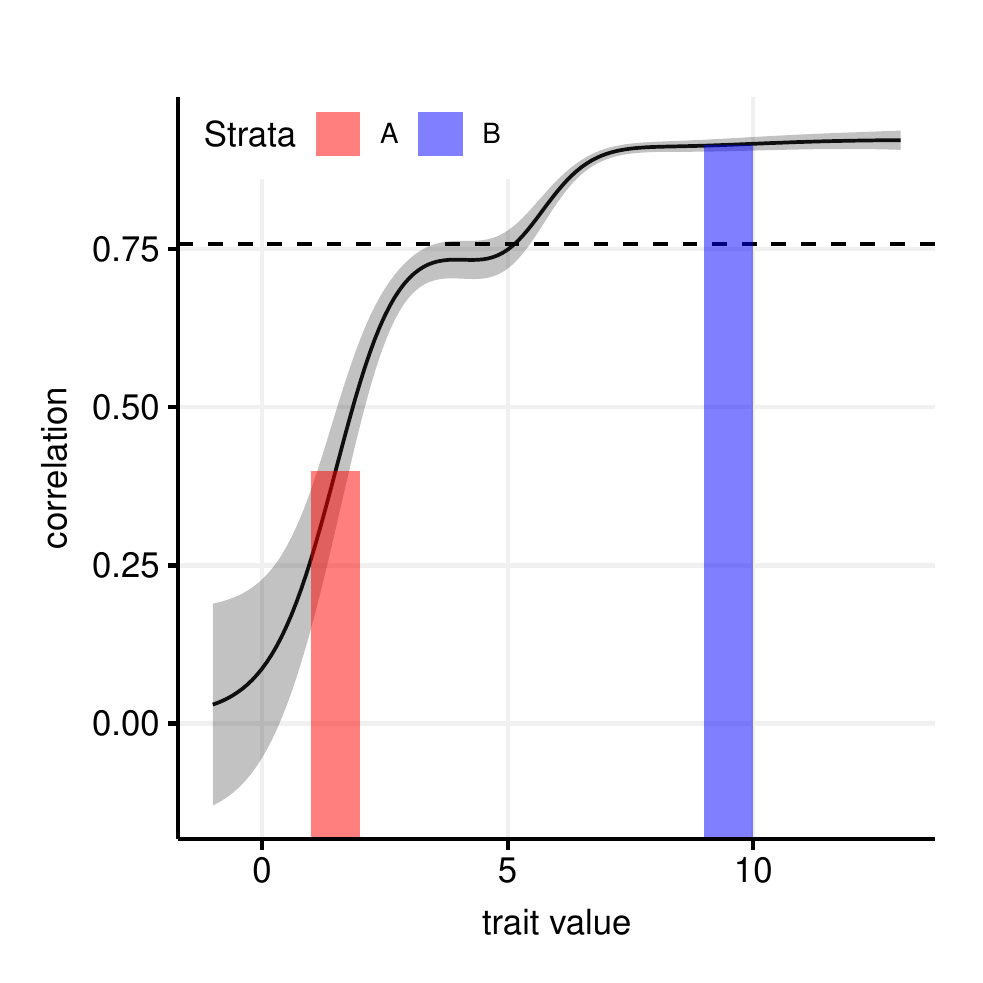}
      \subcaption{Estimated correlation curve for the data displayed in panel (a), and 95\% pointwise confidence intervals are shown in grey. The height of the bars displays the average value of the correlation curve within strata $A$ and $B$.}
       \label{fig:corfig}
\end{subfigure}
\end{figure}

If the joint distribution is exchangeable, so that $(Y_{1},Y_{2})$
has the same bivariate distribution as $(Y_{2},Y_{1})$, the correlation
curve is invariant to which variable we condition on, i.e.~whether
we measure the locally explained variance of $Y_{1}$ conditional
on $Y_{2}$ or vice versa. This means that the role of the mother
and child in the above interpretation can be interchanged, and the dependence
structure in strata $A^{*}$ and $B^{*}$ in Figure \ref{fig:stratfig})
is similar to the dependence structure in strata $A$ and $B$; the
correlation curve $\rho(y)$ as a function of $y$ thus represents
a measure of the mother-child trait dependence when either the mother
or the child has trait value equal to $y$. In the next section, we
show more precisely how $\rho(y)$ is defined in terms of locally
explained variance.

\subsubsection{Standard correlation curves for bivariate relationships}

Let $(Y_{1},Y_{2})$ be random variables from a bivariate continuous
distribution, and define $\tau_{1}^{2}=\Var(Y_{1})$, $\tau_{2}^{2}=\Var(Y_{2})$,
and $\rho=\cor(Y_1,Y_2)$. Further, define $\mu(y)=\E(Y_{1}|Y_{2}=y)$ and $\sigma^{2}(y)=\Var(Y_{1}|Y_{2}=y)$ as functions of $y$. Assuming that $\mu(y)$ is differentiable, define $\beta(y)=\mu'(y)$,
i.e.~the slope of the (typically non-linear) regression curve $\mu(y)$
when $Y_{1}$ is regressed on $Y_{2}$. Recall that in a standard linear regression context, $\mu(y)$ is a linear function of $y$, where the slope $\beta_{1|2} := \beta(y)$ and the conditional variance $\sigma_{1|2}^2 := \sigma^2(y)$ are both constant.

By the law of total variance,
\[
\Var(Y_{1})=\Var(\E(Y_{1}|Y_{2}))+\E(\Var(Y_{1}|Y_{2})),
\]
and it thus seems natural to define in general
\[
\text{Proportion of }\Var(Y_1)\text{ explained by }Y_2=\frac{\Var(\E(Y_{1}|Y_{2}))}{\Var(\E(Y_{1}|Y_{2}))+\E(\Var(Y_{1}|Y_{2}))}.
\] 
In the case of linear regression, $\Var(\E(Y_{1}|Y_{2}))= \tau_2^2\beta_{1|2}^2$
and $\E(\Var(Y_{1}|Y_{2}))= \sigma_{1|2}^2$, and the proportion of explained variance can thus be written
\begin{equation}
\frac{(\tau_2\beta_{1|2})^2}{(\tau_2\beta_{1|2})^2+\sigma_{1|2}^2}=\left(\frac{\tau_2\beta_{1|2}}{\tau_1}\right)^2=\rho^2,\label{eq:prop.expl.lin}
\end{equation}
which is the usual formula for explained variance in a linear regression.

We want to define a ``local'' variant of $\rho^{2}$, describing
the proportion of explained variance when $Y_{2}=y$, thus to define
$\rho^{2}(y)$ as a function of $y$. To this end, (\ref{eq:prop.expl.lin}) is a natural starting point, and the extension to a non-linear setting would thus be to allow both $\beta(y)$ and $\sigma^2(y)$ to depend on $y$. This leads to the definition
\begin{equation}
\rho(y)=\frac{\tau_{2}\beta(y)}{\left[\left(\tau_{2}\beta(y)\right)^{2}+\sigma^{2}(y)\right]^{1/2}},\label{correlationcurve}
\end{equation}
where we recall that $\tau_{2}^{2}=\Var(Y_{2})$, $\beta(y)=\frac{d}{dy}\,\E(Y_{1}|Y_{2}=y)$,
and $\sigma^{2}(y)=\Var(Y_{1}\mid Y_{2}=y)$.

Indeed, this is the formula developed by Bjerve et al.~\citep{bjerve93}
and Doksum et al.~\citep{doksum_correlation_1994}. As pointed
out by Bjerve et al., the correlation curve should not be confused
with the conditional correlation obtained by applying the usual correlation
formula to the conditional distribution of $(Y_{1},Y_{2})$ given
$Y_{2}=y$, which would always be zero. It should also be noted that while $\tau_2$ is kept fixed in~(\ref{correlationcurve}), the denominator $\left(\tau_{2}\beta(y)\right)^{2}+\sigma^{2}(y)$ is no longer necessarily equal to $\tau_1^2=\Var(Y_1)$ from the original distribution. In fact, for a \emph{fixed} $y=y_0$, it corresponds to $\Var(Z_1)$ from a hypothetical bivariate distribution $(Z_1,Z_2)$ where $\Var(Z_2) = \tau_2^2$ and $\Var(Z_1)$ is determined from having a linear regression of $Z_1$ on $Z_2$ with constant slope $\beta(y_0)$ and constant conditional variance $\Var(Z_1|Z_2) = \sigma^2(y_0)$.

\subsubsection{Correlation curves for symmetric bivariate relationships}

In our setting, we are interested in relationships between pairs of
family members, for example, a pair of twins or a child and a parent.
We denote the pair's respective trait values by $Y_{1}$ and $Y_{2}$.
At first glance, it may seem natural to ask about the explained variation
of a child trait $Y_{1}$, conditional on its parental value $Y_{2}$.
However, this is less natural for twins, who are from the same generation.
Indeed, most biometrical models assume that the positive correlation
between the trait values is generated by shared genes and shared environment;
the sharing is symmetrical between family members, and the generational
aspect is only used to compute the degree of relatedness. That is,
in pairs of family members, the two members should be exchangeable,
so that $(Y_{1},Y_{2})$ and $(Y_{2},Y_{1})$ have the same bivariate
distribution. Clearly, this means that when applying (\ref{correlationcurve})
in a heritability setting, it would be reasonable to expect that $Y_{1}$
conditional on $Y_{2}$ should provide the same answers as $Y_{2}$
conditional on $Y_{1}$. While exchangeability is obviously not the
case for general bivariate distributions, we achieve pairwise exchangeability
by a corresponding restriction of our parametric models for the bivariate
distributions, as described later. When including covariates, the
assumption of pairwise exchangeability should apply to the residuals,
i.e.~the mean-adjusted traits $Y_{1}-\beta^{t}x_{1}$ and $Y_{2}-\beta^{t}x_{2}$.

Note that it would suffice to assume that, for all $y$,
\begin{align} \label{def:exchangability}
\tau_1^2=\Var(Y_{1})&=\Var(Y_{2})=\tau_2^2=:\sigma, \nonumber\\
\E(Y_{1}\mid Y_{2}=y)&=\E(Y_{2}\mid Y_{1}=y) =: \mu(y),\\
\Var(Y_{2}\mid Y_{1}=y)&=\Var(Y_{1}\mid Y_{2}=y) =: \sigma^2(y), \nonumber
\end{align}
since this would imply that (\ref{correlationcurve}) would be invariant
to the direction of conditioning. However, the models presented in
this paper all imply full pairwise exchangeability. We do \emph{not},
however, ask for full exchangeability of the multivariate outcome
distribution; for instance, a mother-father-child trio would clearly
not have the same trivariate distribution as a child-father-mother
trio. Nevertheless, the pairwise exchangeability implies that all
family members have the same marginal distributions.
The appropriateness of the exchangeability assumptions will be addressed
in the Discussion.


\subsection{Heritability curves}

\label{section: Her} 

Assuming $\rho(y)$ to be well defined for the joint distribution
of the two family members, we are interested in the degree to which the value
of $\rho(y)$ can be attributed to heritability on one side, and to
environment on the other. In particular, we are interested in knowing
how these contributions vary with $y$. 

\begin{definition}[Heritability curve for the twin ADE model] \label{def:heritability_curve_twins} Assume the exchangeability property~(\ref{def:exchangability}) holds for both MZ and DZ bivariate distributions. Adopting the moment equations \eqref{eq:ADE_moment}, we define
the heritability curve by
\begin{equation}
a^{2}(y)=4\rho^{(DZ)}(y)-\rho^{(MZ)}(y),\label{eq:a(y)_twins}
\end{equation}
where $\rho^{(MZ)}(y)$ and $\rho^{(DZ)}(y)$ are the correlation
curves of MZ and DZ twins calculated according to (\ref{correlationcurve}).
Similarly, (\ref{eq:ADE_moment}) allows local versions
of the dominance effect
\begin{equation}
d^{2}(y) =2\left[(\rho^{(MZ)}(y)-2\rho^{(DZ)}(y)\right]
\end{equation} 
and residual environment
\begin{equation}
e^{2}(y) =1-\rho^{(MZ)}(y)
\end{equation} 
to be defined.
\end{definition} 

Note that with Equation~\eqref{eq:a(y)_twins}, a trait value can
in principle display a non-linear association within both MZ and DZ
twins, but have constant local heritability $a^{2}(y)$ due to a canceling
effect in $4\rho^{(DZ)}(y)-\rho^{(MZ)}(y)$.

We similarly define the heritability curve for family trios by adopting
the genetic model described in Section \ref{sec:trios} locally.

\begin{definition}[Heritability curve for an ACE model of mother-father-child
trios] \label{def:heritability_curve_trios} Assuming the exchangeability 
property~\eqref{def:exchangability},
let $\rho^{(MC)}(y)$ and $\rho^{(FC)}(y)$ be correlation curves~(\ref{correlationcurve})
for mother-child and father-child relationships, respectively. The
heritability curves $a^{2}(y)$, $c^{2}(y)$, and $e^{2}(y)$ are then
given by 
\begin{align}
a^{2}(y) & =2\rho^{(FC)}(y)\label{eq:a(y)_triplets}\\
c^{2}(y) & =\rho^{(MC)}(y)-\rho^{(FC)}(y)\label{eq:c(y)_triplets}\\
e^{2}(y) & =1-\rho^{(MC)}(y)-\rho^{(FC)}(y)\label{eq:tm(y)_triplets}
\end{align}

\end{definition}


We next define a parametric class of multivariate densities for family data that can
easily be fit by maximum likelihood, allows for non-linear dependence,
and admits an analytical expression for the correlation curve~(\ref{correlationcurve}).

\section{Correlation and heritability curves for Gaussian mixtures} \label{section:Gauss}

Throughout this paper we denote by $\phi_d\!\left(\boldsymbol{y};\boldsymbol{\mu},\boldsymbol{\Sigma}\right)$ a $d$ dimensional Gaussian density,
evaluated at $\boldsymbol{y}=(y_1,\ldots,y_d)$, and with mean vector~$\boldsymbol{\mu}$
and covariance matrix~$\boldsymbol{\Sigma}$. We will only use $d=1,2,3$.

Consider the observed trait vector $\boldsymbol{y}=(y_1,y_2)$ for a pair of family members. We assume that it follows a $m$-component Gaussian mixture
with density 
\begin{equation}
\sum_{k=1}^{m}p_{k}\phi_{2}\!\left(\boldsymbol{y};\boldsymbol{\mu}_{k},\boldsymbol{\Sigma}_{k}\right),\label{eq:gaussianmixture}
\end{equation}
where $\sum_{k=1}^{m}p_{k}=1$. The mean and covariance structure 
of the the $k$th mixture component is taken to be
\begin{eqnarray}
\boldsymbol{\mu}_{k}=(\mu_{k},\mu_{k}),\quad\boldsymbol{\Sigma}_{k}=\begin{pmatrix}\sigma_{k}^{2} & \sigma_{k}^{2}\rho_{k}\\
\sigma_{k}^{2}\rho_{k} & \sigma_{k}^{2}
\end{pmatrix},\label{def:mu_Sigma}
\end{eqnarray}
where $\rho_{k}\in(-1,1)$ is the correlation parameter. The components
of the mixture are ordered such that $\sigma_{1}\leq\dots\leq\sigma_{m}$.
If $\sigma_{q}=\sigma_{q+1}=\dots=\sigma_{m}$ for some $q<m$, then
we order the components in ascending order with respect of the means,
i.e. $\mu_{q}<\dots<\mu_{m}$. Note that under the above constraints
on $\boldsymbol{\mu}_{k}$ and $\boldsymbol{\Sigma}_{k}$, 
the exchangeability condition \eqref{def:exchangability}
is satisfied. In addition, $Y_{1}$ and $Y_{2}$ have the same marginal
distribution, with marginal density

\begin{equation}
g(y)=\sum_{k=1}^{m}g_{k}(y)
\label{eq:marginal_mixture}\end{equation}
as the sum over the individual (weighted) components $g_{k}(y):=p_{k}\phi_{1}\!\left(y;\mu_{k},\sigma_{k}^{2}\right)$.
The (total) marginal mean, marginal variance, and correlation are given by
\begin{equation}
\mu=\sum_{k=1}^{m}p_{k}\mu_{k}, \quad
\sigma^{2}=\sum_{k=1}^{m}p_{k}\left[\sigma_{k}^{2}+(\mu_{k}-\mu)^{2}\right]
\quad\text{and}\quad
\rho=\sigma^{-2}\sum_{k=1}^{m}p_{k}\left[\rho_k\sigma_{k}^{2}+(\mu_{k}-\mu)^{2}\right].
\label{Globalsigma_mean}
\end{equation}

We next derive local versions of $\mu$ and $\sigma$. Let $\delta$
be a latent variable with $P(\delta=k)=p_{k}$, $k=1,\dots m$, showing
which mixture component is realized. From Bayes' rule, it follows
that the distribution of $\delta\mid Y_{2}=y$ is given as 
\begin{align}
p_{k}^{*}(y):=P(\delta=k\mid Y_{2}=y) & =\frac{g_{k}(y)}{g(y)}.\label{pikappa}
\end{align}
Also, by the assumed normality of each mixture component, it follows that 
\[
\mu_{k}(y):=\E(Y_{1}\mid Y_{2}=y,\delta=k)=\mu_{k}+\rho_{k}\cdot(y-\mu_{k}),
\]
i.e.~$\mu_{k}(y)$ is a line with slope $\rho_{k}$, going through
the point $(\mu_{k},\mu_{k})$. By the law of total expectation, 
\begin{equation}
\begin{split}\mu(y) & :=\E(Y_{1}|Y_{2}=y)=\E\left[\E\left(Y_{1}\mid Y_{2}=y,\delta\right)\mid Y_{2}=y\right]\\
 & =\sum_{k=1}^{m}p_{k}^{*}(y)\mu_{k}(y).
\end{split}
\label{expectation}
\end{equation}
Similarly, by the law of total variance: 
\begin{equation}
\begin{split}\sigma^{2}(y) & :=\Var(Y_{1}|Y_{2}=y)\\
 &=\E\left[\Var\left(Y_{1}\mid Y_{2}=y,\delta=k\right)\mid Y_{2}=y\right]\\
 &\phantom{=}+\Var\left[\E\left(Y_{1}\mid Y_{2}=y,\delta=k\right)\mid Y_{2}=y\right]\\
 & =\E\left(\sigma_{\delta}^{2}(1-\rho_{\delta}^{2})\mid Y_{2}=y\right)+\Var\left(\mu_{\delta}(y)\mid Y_{2}=y\right)\\
 & =\sum_{k=1}^{m}p_{k}^{*}(y)\left[\sigma_{k}^{2}(1-\rho_{k}^{2})+\left[\mu_{k}(y)-\mu(y)\right]^{2}\right].
\end{split}
\label{eq:sigma}
\end{equation}
We are now ready to give the expression for $\beta(y)=\mu'(y)$, to
be used in the correlation curve~\eqref{correlationcurve} for the
mixture distribution.\begin{proposition} \label{derivative} Define
\[
d_{k}(y):=-(y-\mu_{k})/\sigma_{k}^{2}.
\]
Then, 
\begin{align}
\beta(y)= & \sum_{k=1}^{m}p_{k}^{*}(y)\left[\rho_{k}+\left(\mu_{k}(y)-\mu(y)\right)d_{k}(y)\right],\label{eq:beta}
\end{align}
where $p_{k}^{*}(y)$ is given by (\ref{pikappa}).
\end{proposition}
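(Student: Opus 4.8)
The plan is to obtain $\beta(y)=\mu'(y)$ by differentiating the expression $\mu(y)=\sum_{k=1}^m p_k^*(y)\mu_k(y)$ from \eqref{expectation} term by term with the product rule. Two ingredients are needed: the derivative of each conditional component mean $\mu_k(y)=\mu_k+\rho_k(y-\mu_k)$, and the derivative of each posterior weight $p_k^*(y)=g_k(y)/g(y)$. The first is immediate, $\mu_k'(y)=\rho_k$. The second is the only place where a short calculation is required, and it is the step I expect to carry the proof.

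For the weights, I would first record the score identity for a univariate Gaussian: since $\log\phi_1(y;\mu_k,\sigma_k^2)$ differs from $-(y-\mu_k)^2/(2\sigma_k^2)$ only by a constant, its derivative in $y$ is $-(y-\mu_k)/\sigma_k^2=d_k(y)$. Hence $g_k'(y)=g_k(y)d_k(y)$ and, summing over $k$, $g'(y)=\sum_j g_j(y)d_j(y)$. Applying the quotient rule to $p_k^*(y)=g_k(y)/g(y)$ and substituting these derivatives then gives
\[
(p_k^*)'(y)=p_k^*(y)\Bigl(d_k(y)-\sum_{j=1}^m p_j^*(y)d_j(y)\Bigr),
\]
i.e.~each weight's derivative equals its value times the deviation of $d_k(y)$ from the posterior-weighted average $\overline d(y):=\sum_j p_j^*(y)d_j(y)$.

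Combining the pieces yields
\[
\beta(y)=\sum_{k=1}^m p_k^*(y)\rho_k+\sum_{k=1}^m p_k^*(y)\bigl(d_k(y)-\overline d(y)\bigr)\mu_k(y),
\]
and the final step is to simplify the second sum to match \eqref{eq:beta}. Expanding it as $\sum_k p_k^*(y)d_k(y)\mu_k(y)-\overline d(y)\sum_k p_k^*(y)\mu_k(y)$ and using $\sum_k p_k^*(y)\mu_k(y)=\mu(y)$ from \eqref{expectation}, the subtracted term becomes $\mu(y)\overline d(y)=\sum_k p_k^*(y)\mu(y)d_k(y)$, so the second sum collapses to $\sum_k p_k^*(y)\bigl(\mu_k(y)-\mu(y)\bigr)d_k(y)$; adding back the first sum reproduces \eqref{eq:beta} exactly. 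The only real obstacle is bookkeeping the cancellation: the averaging term $\overline d(y)$ must be reabsorbed using the normalization $\sum_k p_k^*(y)=1$ together with the identity for $\mu(y)$, after which no further work remains.
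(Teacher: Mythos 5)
Your proposal is correct and follows essentially the same route as the paper's proof: the Gaussian score identity $g_k'(y)=g_k(y)d_k(y)$, the logarithmic derivative of the posterior weights giving $(p_k^*)'(y)=p_k^*(y)\bigl(d_k(y)-\overline d(y)\bigr)$, and the product rule applied to $\mu(y)=\sum_k p_k^*(y)\mu_k(y)$. The only difference is cosmetic bookkeeping in the final cancellation (you expand and regroup directly, while the paper inserts $-\mu(y)$ and then drops $-\overline d(y)$ via the two zero-sum identities), so nothing further is needed.
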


\begin{proof}
See Appendix.
\end{proof}

Notice that when there is only a single mixture component $(m=1)$, yielding a bivariate Gaussian distribution, the above expressions reduce to $\sigma = \sigma_1$, $\mu(y)=\mu_1$, $\sigma^2(y) = \sigma^2_{1}(1 - \rho^2_{1})$ and $\beta(y)=\rho_1$. Inserting these expressions in (\ref{correlationcurve}) we get a constant correlation curve, $\rho(y)=\rho_1$ for every $y$. Hence, if $m=1$ the heritability curve $a^2(y)$, given by \eqref{eq:a(y)_twins} 
or \eqref{eq:a(y)_triplets}, reduces to the ordinary heritability coefficient $a^2$.

\subsection{Properties of the correlation curve under a Gaussian mixture} \label{sec:asym}

It is of interest to investigate the asymptotic behaviour of $\rho(y)$ as $y \rightarrow \pm \infty$ under the mixture (\ref{eq:gaussianmixture}) since this can be used to evaluate the asymptotic behaviour of the heritability curve $a^2(y)$, which in general will depend on the family design. We state the result in the following theorem, which also includes the limit behaviour of $\beta(y)$ and  $\sigma^2(y)$.

Intuitively, a one-dimensional mixture distribution is asymptotically
dominated in the tails by the component with the largest variance;
if two or more components all share the largest variance, the sizes
of the mean values come into play, with the component with the smallest
mean value dominating when $y\to-\infty$, and the largest when $y\to+\infty.$
While this in itself is fairly obvious, we here use it to develop
the resulting asymptotic behavior of $\beta(y)$, $\sigma^{2}(y)$,
and $\rho(y)$. 

We consider the following two cases: Recall the ordering $\sigma_{1}^{2}\leq\cdots\leq\sigma_{m}^{2}$,
and define $q=\min\left\{ l:\sigma_{l}^{2}=\sigma_{m}^{2}\right\} $.
We define Case I as $q=m$. For the alternative, Case II, where $q<m$,
our conventions is that the mean values are then ordered such that
$\mu_{q}<\mu_{m}$. To simplify the notation, define the constant
$K$ as follows:
\[
\begin{array}{ccc}
\text{Case I }(q=m), & y\to\pm\infty, & K:=m,\\
\text{Case II }(q<m), & y\to-\infty, & K:=q,\\
\text{Case II }(q<m), & y\to+\infty, & K:=m.
\end{array}
\]

\begin{theorem}\label{asymptotic}

The asymptotic behavior of $\beta(y)$, $\sigma^{2}(y)$, and $\rho(y)$,
given by (\ref{eq:beta}), (\ref{eq:sigma}), and (\ref{correlationcurve}),
are
\begin{align}
\lim_{y}\beta(y) & =\rho_{K},  \nonumber \\
\lim_{y}\sigma^{2}(y) & =\sigma_{K}^{2}(1-\rho_{K}^{2}), \nonumber \\
\lim_{y}\rho(y) & =\tilde{\rho}_{K}:=\frac{\sigma\rho_{K}}{\left[\sigma^{2}\rho_{K}^{2}+\sigma_{K}^{2}(1-\rho_{K}^{2})\right]^{1/2}}. \label{eq:asymrho}
\end{align}
The global variance $\sigma^{2}$ is defined as in (\ref{Globalsigma_mean}).
\end{theorem}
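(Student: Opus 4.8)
The plan is to show that the posterior mixing weights $p_k^*(y)$ of~(\ref{pikappa}) concentrate on the single dominant component $K$ as $y\to\pm\infty$, and then to read off the limits of $\beta(y)$, $\sigma^2(y)$ and $\rho(y)$ directly from~(\ref{eq:beta}), (\ref{eq:sigma}) and~(\ref{correlationcurve}). First I would compute, for any $k\neq K$, the log-ratio of the weighted components,
\[
\log\frac{g_k(y)}{g_K(y)} = \log\frac{p_k\sigma_K}{p_K\sigma_k} - \frac{(y-\mu_k)^2}{2\sigma_k^2} + \frac{(y-\mu_K)^2}{2\sigma_K^2},
\]
and expand it in powers of $y$. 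The coefficient of $y^2$ is $\tfrac12(\sigma_K^{-2}-\sigma_k^{-2})$, which is strictly negative whenever $\sigma_k^2<\sigma_K^2$; when $\sigma_k^2=\sigma_K^2$ (the tie arising in Case~II) the quadratic term cancels and the coefficient of $y$ reduces to $(\mu_k-\mu_K)/\sigma_K^2$, whose sign, given the convention $K=q$ for $y\to-\infty$ and $K=m$ for $y\to+\infty$, drives the linear term to $-\infty$. In every case $\log(g_k/g_K)\to-\infty$, so $g_k(y)/g_K(y)\to 0$; summing over $k$ then gives $p_K^*(y)\to 1$ and $p_k^*(y)\to 0$ for all $k\neq K$. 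This decay is at least exponential in $|y|$: there is a constant $c>0$ with $p_k^*(y)=O(e^{-c|y|})$ for each $k\neq K$.

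With concentration in hand, the second step is to control the products in which an exponentially small weight multiplies a polynomially growing factor. Each $\mu_k(y)$ is affine in $y$, each $d_k(y)$ is affine in $y$, and $\mu(y)$, being a convex combination of the $\mu_k(y)$ by~(\ref{expectation}), is $O(|y|)$; hence every bracketed term in~(\ref{eq:beta}) and~(\ref{eq:sigma}) is at most quadratic in $y$. Multiplying by $p_k^*(y)=O(e^{-c|y|})$ for $k\neq K$ forces all off-dominant contributions to vanish. The one delicate point, which I expect to be the main obstacle, is the self-term of the dominant component in~(\ref{eq:beta}), namely $p_K^*(y)\,[\mu_K(y)-\mu(y)]\,d_K(y)$, which carries the indeterminate form $1\cdot 0\cdot\infty$ because $d_K(y)\to\mp\infty$. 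I would resolve it by writing, using~(\ref{expectation}),
\[
\mu_K(y)-\mu(y) = \sum_{k\neq K} p_k^*(y)\,[\mu_K(y)-\mu_k(y)],
\]
so that each summand is $O(e^{-c|y|})$ times an affine function; thus $\mu_K(y)-\mu(y)$ itself decays exponentially, and even after multiplication by the affine factor $d_K(y)$ the product tends to $0$. The same identity shows $[\mu_K(y)-\mu(y)]^2\to 0$, which disposes of the corresponding self-term in~(\ref{eq:sigma}).

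Combining the two steps leaves only the surviving pieces of the dominant component, giving $\beta(y)\to\rho_K$ and $\sigma^2(y)\to\sigma_K^2(1-\rho_K^2)$. Finally I would substitute these two limits, together with $\tau_2=\sigma$ (the global standard deviation from~(\ref{Globalsigma_mean})), into the correlation curve~(\ref{correlationcurve}); since that expression is a continuous function of $\beta(y)$ and $\sigma^2(y)$ away from the degenerate denominator, the limit
\[
\lim_{y}\rho(y) = \frac{\sigma\rho_K}{\left[\sigma^2\rho_K^2+\sigma_K^2(1-\rho_K^2)\right]^{1/2}} = \tilde{\rho}_K
\]
follows at once, establishing~(\ref{eq:asymrho}).
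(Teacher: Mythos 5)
Your proposal is correct and follows essentially the same route as the paper's proof: concentration of the posterior weights $p_k^*(y)$ on the dominant component $K$, domination of the polynomially growing factors by the (at least) exponentially decaying off-dominant weights, and resolution of the indeterminate self-term via the identity $\mu_K(y)-\mu(y)=\sum_{k\neq K}p_k^*(y)\left[\mu_K(y)-\mu_k(y)\right]$, which is exactly the step the paper uses. Your version is merely a more explicit, quantitative rendering (log-ratios and an explicit decay rate) of the paper's $\sim$ and $\ll$ bookkeeping.
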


\begin{proof}
See Appendix.
\end{proof} 

Theorem \ref{asymptotic} shows that $\beta(y)$, $\sigma^2(y)$, and $\rho(y)$ all stabilize to finite limits as $y \rightarrow \pm \infty$, and their behaviour is determined by the variance and correlation of mixture component $K$, in addition to the global variance $\sigma^2$.
In Case I we have that the asymptotic correlation is the same in both tails, 
as exemplified in Figure~\ref{Fig:asymptotic}a) where $K=3$ and $\tilde\rho_3 \approx 0.5$.

\begin{figure}[H] 	
  \begin{subfigure}{7cm}
    \centering\includegraphics[width=7cm]{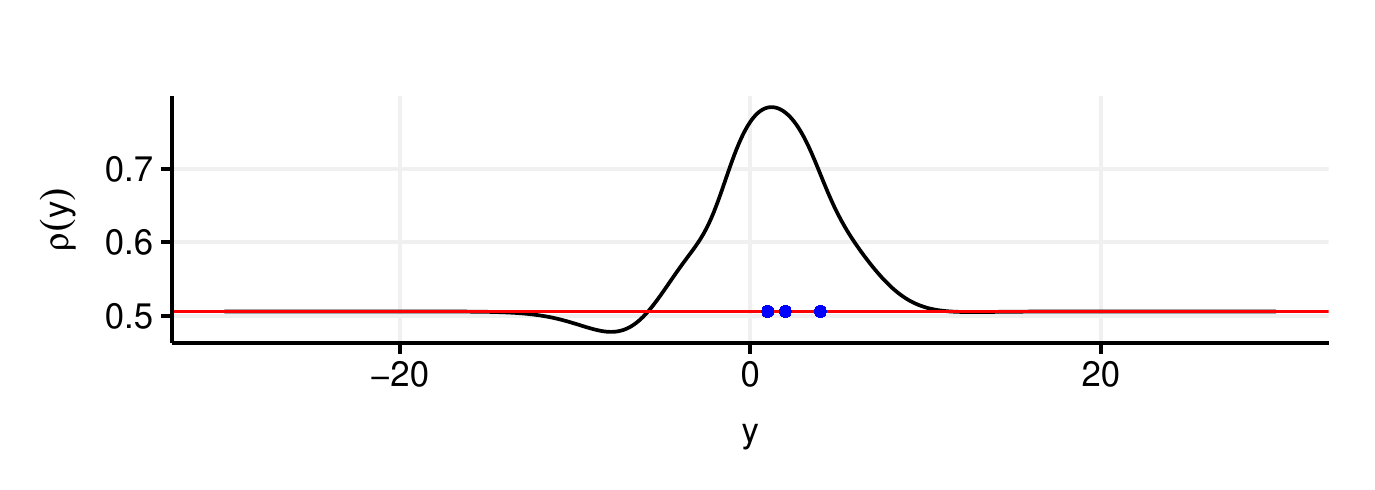}
  \end{subfigure}
  \begin{subfigure}{7cm}
    \centering\includegraphics[width=7cm]{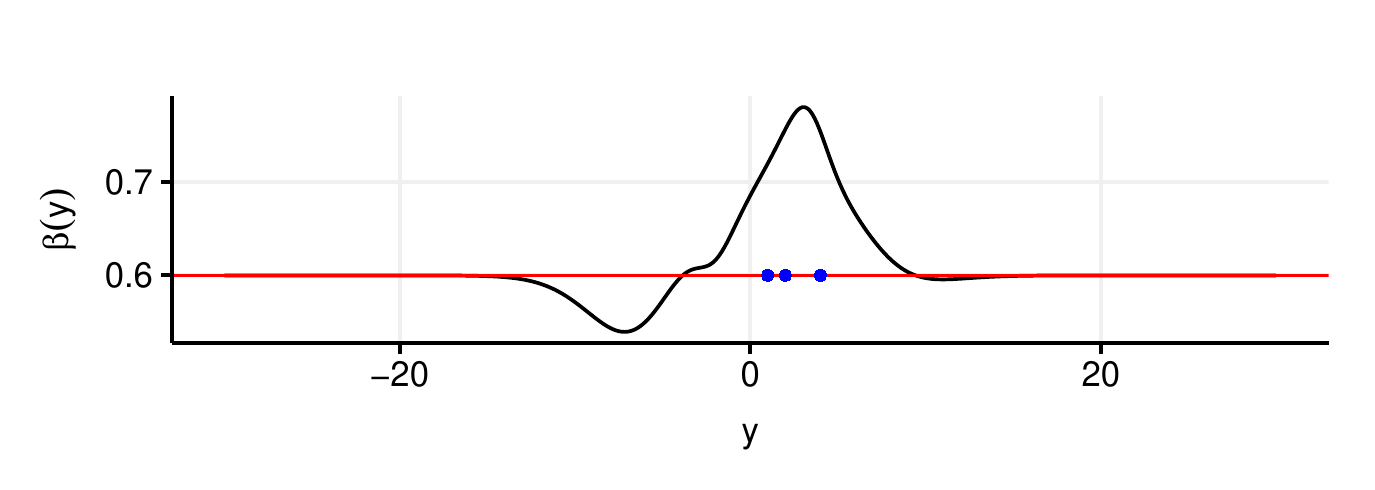}
  \end{subfigure}
 
  \begin{subfigure}{7cm}
    \centering\includegraphics[width=7cm]{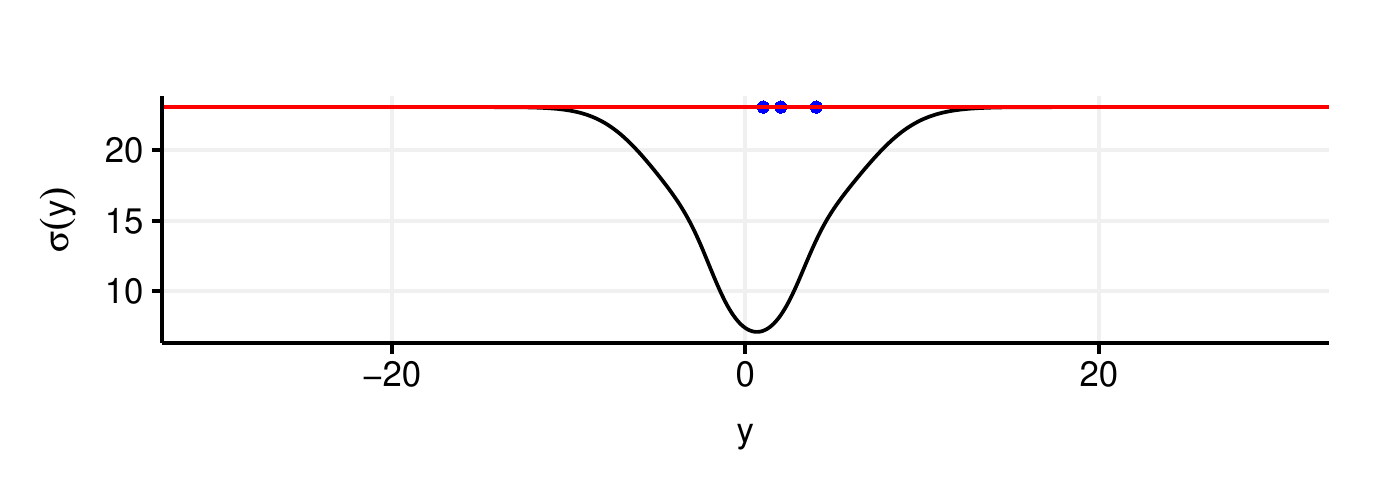}
  \end{subfigure}
  \begin{subfigure}{7cm}
  \centering\includegraphics[width=7cm]{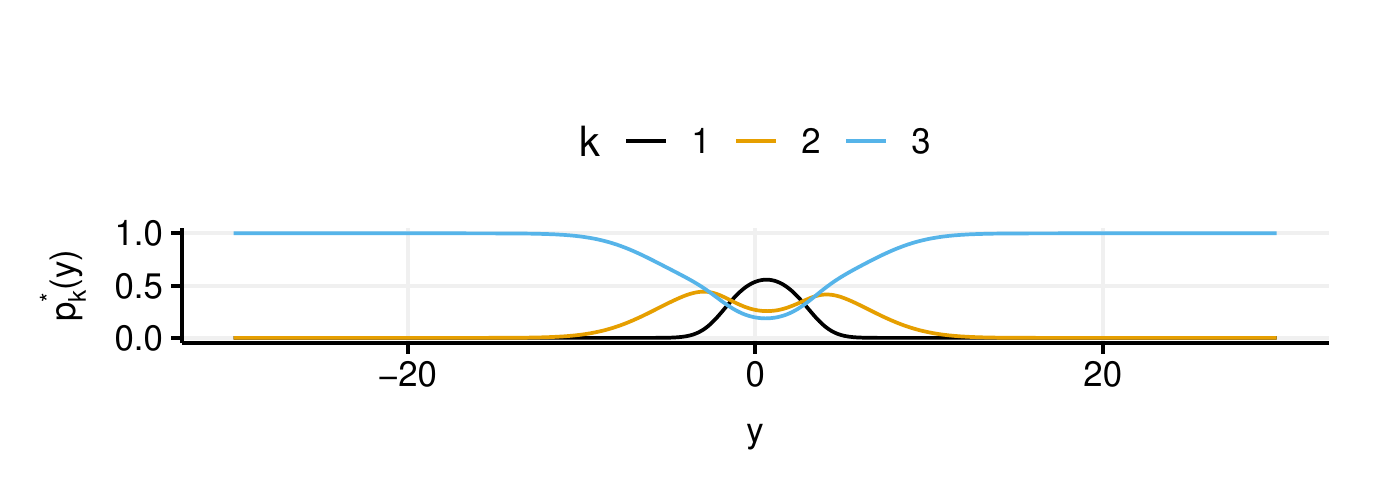}
  \end{subfigure}
  \caption{Illustration of the asymptotic tail behaviour (red line)
   of the correlation curve and its builing blocks under 
   a $m=3$ component mixture model:
  (a) $\rho(y)$, (b) $\beta(y)$, (c) $\sigma^2(y)$, and (d) $p_k^*(y)$. The mixture model has parameters $\left(\sigma_1, \sigma_2, \sigma_3 \right) = \left(2,4,6 \right)$, $\left( \mu_1, \mu_2, \mu_3 \right)= \left( 1,2,4 \right)$, $\left( \rho_1, \rho_2, \rho_3 \right) = \left(0.7, 0.8, 0.6 \right)$ and $\left(p_1, p_2, p_3 \right)= (0.3, 0.3, 0.4)$. 
  \label{Fig:asymptotic}}
\end{figure}

Identical correlations in both tails may seem unmotivated for family data.
Still, within the data range the correlation curve will be determined by all of the mixture components,
in accordance with \eqref{correlationcurve}, which allows for different behaviour in the tails.

Case II, on the other hand, allows for different asymptotic correlation in the left and right tail, with the differences being the use of $\rho_n$ versus $\rho_m$ in \eqref{eq:asymrho}.

Theorem~\ref{asymptotic} is further illustrated in Figure \ref{Fig:asymptotic} showing the limiting behaviour of 
$\beta(y)$, $\sigma^2(y)$, and $\rho(y)$ for a three-component mixture under Case~I. Note that
the limiting correlation satisfies $\tilde\rho_3<\min(\rho_1,\rho_2,\rho_3)$ for the parameter values used
in the figure. This is counter-intuitive because the posterior probability $p_3^{*}(y)$ approaches~1 
in the tails (upper left panel), but still the limiting correlation is not simply $\rho_3$. The peak in correlation
around $\mu_2=2$ is reasonable as the second component has the highest $\rho$.

\subsubsection{The case of equal $\sigma_k$'s}
It is worth studying the special case that
$\sigma_1=\sigma_2=\cdots=\sigma_m$, with their common value denoted by $\sigma_0$.
This is Case~II of Theorem~\ref{asymptotic} with $q=1$.
From~\eqref{Globalsigma_mean}
we get $\sigma^2=\sigma_0^2+\sigma_\mu^2$, where
\begin{equation}
\sigma_\mu^2 = \sum_{k=1}^{m}p_{k}(\mu_{k}-\mu)^{2},
\label{def:sigma2_mu}
\end{equation}
which is the variance due to differences in locations of mixture components.
Recall the convention that the mixture components are ordered such that 
$\mu_1<\mu_2<\cdots<\mu_m$. We are now ready to state the following corollary
to Theorem~\ref{asymptotic}.
\begin{corollary}\label{prop1}
When $\sigma_1=\cdots=\sigma_m$ the asymptotic behavior of 
$\rho(y)$, given by~\eqref{correlationcurve}, is	
\begin{equation}
\lim_{y\rightarrow-\infty}\rho(y)  =\rho_1\sqrt\frac{1+\gamma}{1+\gamma\rho_1^2}
\quad\hbox{and}\quad
\lim_{y\rightarrow\infty}\rho(y)  =\rho_m\sqrt\frac{1+\gamma}{1+\gamma\rho_m^2}, 
\end{equation}
where $\gamma=\sigma_\mu^2/\sigma_0^2$ is the ratio of between and 
within-component variance in the Gaussian mixture.
\end{corollary}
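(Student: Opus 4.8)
The plan is to recognize that the hypothesis $\sigma_1=\cdots=\sigma_m=:\sigma_0$ places us squarely in Case~II of Theorem~\ref{asymptotic}, so that the corollary is obtained by specializing the general limit formula and performing a short algebraic simplification. Since all component variances coincide we have $\sigma_m^2=\sigma_0^2$, hence $q=\min\{l:\sigma_l^2=\sigma_m^2\}=1<m$, and the ordering convention $\mu_1<\cdots<\mu_m$ applies. Theorem~\ref{asymptotic} then directly supplies the tail limits of $\rho(y)$: in the left tail $y\to-\infty$ we take $K=q=1$, while in the right tail $y\to+\infty$ we take $K=m$, giving
\[
\lim_{y\to-\infty}\rho(y)=\tilde\rho_1
\quad\text{and}\quad
\lim_{y\to+\infty}\rho(y)=\tilde\rho_m,
\]
with $\tilde\rho_K$ as in~(\ref{eq:asymrho}).

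It then remains only to rewrite $\tilde\rho_K$ under the equal-variance assumption. First I would record, from~(\ref{Globalsigma_mean}) together with~(\ref{def:sigma2_mu}), that the global variance is $\sigma^2=\sigma_0^2+\sigma_\mu^2=\sigma_0^2(1+\gamma)$, where $\gamma=\sigma_\mu^2/\sigma_0^2$. Substituting $\sigma_K^2=\sigma_0^2$ and $\sigma^2=\sigma_0^2(1+\gamma)$ into
\[
\tilde\rho_K=\frac{\sigma\rho_K}{\left[\sigma^2\rho_K^2+\sigma_K^2(1-\rho_K^2)\right]^{1/2}},
\]
the radicand in the denominator collapses to $\sigma_0^2\bigl[(1+\gamma)\rho_K^2+(1-\rho_K^2)\bigr]=\sigma_0^2(1+\gamma\rho_K^2)$, while the numerator is $\sigma_0\sqrt{1+\gamma}\,\rho_K$. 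Cancelling the common factor $\sigma_0$ yields
\[
\tilde\rho_K=\rho_K\sqrt{\frac{1+\gamma}{1+\gamma\rho_K^2}},
\]
and setting $K=1$ and $K=m$ produces the two claimed limits.

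The argument is thus a specialization plus routine bookkeeping, and I do not expect any substantial obstacle. The one point that warrants care is the correct identification of the dominating component index $K$ in each tail: one must verify that, with all variances equal, the tie-breaking rule of Theorem~\ref{asymptotic} selects the smallest-mean component ($K=1$) as $y\to-\infty$ and the largest-mean component ($K=m$) as $y\to+\infty$, which is precisely the $q=1$ instance of Case~II. Once this is pinned down, everything else follows by direct substitution into~(\ref{eq:asymrho}).
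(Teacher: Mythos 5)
Your proof is correct and follows exactly the route the paper intends: the corollary is the $q=1$ instance of Case~II of Theorem~\ref{asymptotic}, and the stated limits follow by substituting $\sigma_K^2=\sigma_0^2$ and $\sigma^2=\sigma_0^2(1+\gamma)$ into~(\ref{eq:asymrho}); your algebraic simplification of the radicand to $\sigma_0^2(1+\gamma\rho_K^2)$ is accurate. The paper gives no separate proof of this corollary, treating it as the same direct specialization you carried out, including the tie-breaking identification $K=1$ as $y\to-\infty$ and $K=m$ as $y\to+\infty$.
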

The limiting correlations always exceed (in absolute value) $\rho_1$ and $\rho_m$,
respectively. When $\gamma\rightarrow\infty$, i.e.~the mixture components gets
increasingly spread out, both limits approach 1 in absolute value.

\subsection{Estimation}\label{sec:est}

In this section we explain how to fit Gaussian mixtures to family data.
On one hand, they are fully parametric distributions, which can be exploited in estimation and inference. On the other hand, allowing the number of mixture components $m$ to grow, 
mixtures become increasingly flexible, which allows us to view them also
as nonparametric tools. In particular, Gaussian mixtures seem well
suited to model small perturbations from Gaussianity.

First, let $\boldsymbol{y}=(y_1, y_2, y_3)$ denote the trait vector for the mother-father-child trio,
which is assumed to have the following mixture density:
\begin{equation*}
\sum_{k=1}^m p_k \phi_3\!\left(\boldsymbol{y}; \boldsymbol{\mu}_{k}, \boldsymbol\Sigma_{k}\right).
\end{equation*}
Here  $\boldsymbol{\mu_k}$, $\boldsymbol{\Sigma_{k}}$ are structured in the following way:
\begin{eqnarray}
\boldsymbol{\mu}_k = (\mu_k, \mu_k, \mu_k),\quad \boldsymbol{\Sigma}_{k} =
 \begin{pmatrix}
  \sigma^2_{k} & \sigma^2_{k}\rho_{k}^{(MF)} & \sigma^2_{k}\rho_{k}^{(MC)}\\
  \sigma^2_{k}\rho_{k}^{(MF)} & \sigma^2_{k} & \sigma^2\rho_{k}^{(FC)}\\
  \sigma^2_{k}\rho_{k}^{(MC)} & \sigma^2_{k}\rho_{k}^{(FC)} & \sigma^2_{k}
 \end{pmatrix},
\label{eq:3mixture} 
\end{eqnarray}
where we use superscripts on the $\rho$'s to denote relationship.
Integrating the above joint density with respect to any one of the three family
members ($y_1$, $y_2$, or $y_3$) will result in the bivariate Gaussian mixture (\ref{eq:gaussianmixture}) from which we defined the correlation curve. The reason for performing joint estimation, rather than pairwise, is to optimally utilize the information
contained in mother-father-child trios. Note that the three marginals are identical
by construction, although the joint distribution is not exchangeable 
unless $\rho_{k}^{(MF)}=\rho_{k}^{(MC)}=\rho_{k}^{(FC)}$ for $k=1,\ldots,m$.

Given $n$ such trios, the parameters ($\mu_k$, $\sigma_k$, $\rho_k$, $p_k$) can be estimated by maximizing the following log-likelihood:
\begin{equation}
\log L = \sum_{i=1}^{n} \log \left[ \sum_{k=1}^m p_k \phi_3\!(\boldsymbol{y}_{i}; \boldsymbol{\mu}_k, \Sigma_{k}) \right].
\label{def:loglik_trio}
\end{equation}

\noindent Once the parameters are estimated, the heritability curve $a^2(y)$ can be obtained via the correlation curves 
as described in Definition~\ref{def:heritability_curve_trios}.

For twins, consider first a dizygotic pair with trait vector $\boldsymbol y = (y_{1}, y_{2})$.
The likelihood contribution from $n^{(MZ)}$ such pairs is:
\begin{equation}
\log L^{(MZ)} = \sum_{i=1}^{n^{(MZ)}}\log \sum_{k=1}^m p_k \phi_2\!(\boldsymbol{y}_{i}; \boldsymbol{\mu}_k, \boldsymbol{\Sigma}_{k}),
\label{def:loglik_twins}
\end{equation}
where $\boldsymbol{\mu}_k$ and $\boldsymbol{\Sigma}_{k}$ are structured as in~\eqref{def:mu_Sigma}.
The likelihood contribution of $n^{(DZ)}$ dizygotic twin pairs, $\log L^{(DZ)}$, is defined analogously using the same number $m$ of mixture components. The only parameters that differ between
 the MZ and DZ cases are the correlation parameters $\rho_k$ in~\eqref{def:mu_Sigma}. 
The fact that $p_k$, $\mu_k$, and $\sigma_k$ are shared across 
the MZ and DZ mixtures, calls for using a combined log-likelihood
 $\log L= \log L^{(MZ)}+\log L^{(DZ)}$.
Once the parameters are estimated, the heritability curve $a^2(y)$ can be obtained via the correlation curves 
as described in Definition~\ref{def:heritability_curve_twins}.

Both of the log-likelihoods~\eqref{def:loglik_trio} and~\eqref{def:loglik_twins} will be maximized using the R-package TMB~\citep{kristensen2016tmb}. 
In TMB the (negative) log-likelihood is implemented as a C++ function, which is compiled and linked into the R session, where the standard function minimizer
 {\tt nlminb} is employed. In addition, TMB calculates the gradient and Hessian (1st and 2nd order derivatives) of the log-likelihood by
Automatic Differentiation~\citep{kristensen2016tmb}. Such derivative information can substantially speed up the minimizer and make it more robust. 
Finally, TMB uses derivatives to calculate the approximate standard deviation of any interest quantity, as a function of the parameters, using
the delta method. This feature of TMB will be used to estimate pointwise confidence intervals of correlation and heritability curves. 

For the purpose of selecting the number of mixture components, $m$,
we calculate both of the criteria $\text{AIC}=-2\log(L)+2Q$
and $\text{BIC}=-2\log(L)+\log(n)Q$ for each candidate model,
where $Q$ is the number of parameters and $\log(L)$ is obtained either from~\eqref{def:loglik_trio} 
or ~\eqref{def:loglik_twins}. Contributing to $Q$ is 
the total number of $p_k$'s, $\mu_k$'s, $\sigma_k$'s, and $\rho_k$'s, but due to the constraint $\sum_{k=1}^{m}p_k=1$ there are only $m-1$ 
free $p_k$'s. Hence, for the trio likelihood~\eqref{def:loglik_trio} we 
have $Q=6m-1$, while for the twin likelihood~\eqref{def:loglik_twins}, with different $\rho_k$ for MZ and DZ twins, we have $Q=5m-1$.
It is clear that for $\log(n)>2$, BIC will be more conservative than AIC, in
the sense of favoring smaller values of $m$. As will be shown below, the correlation curve tends to be more unstable (fluctuating) for larger values 
of~$m$. For this reason we will use BIC as our model selection criterion,
but we will still report AIC as a comparison.  

\section{Applications} \label{section:Appl}

\subsection{BMI of twins}

We use the ``twinData''  dataset found in the R-package ``OpenMx'' \citep{neale2016openmx}. As our response, we take BMI measurements (around age 18) for $n^{(MZ)}=534$ monozygotic
and $n^{(DZ)}=328$ dizygotic female-female twin pairs. 
Table~\ref{Table: twin data} compares models in the range $1\leq m\leq 5$,
and it is seen that the pure bivariate Gaussian model ($m=1$) fits considerably worse than any of
the mixture models ($m>1$).
The lowest AIC and BIC values occur for $m=5$ and $m=2$, respectively,
but it is seen that AIC is almost indecisive between models with $m>1$.
Due to its heavier penalization, 
$\log\left(n^{(MZ)}+n^{(DZ)}\right)=\log(862)=6.8$, of the number of parameters,
BIC more clearly favours $m=2$.
According to our decision to base model selection on BIC, we
choose the model with $m=2$. 

\begin{table}[!htp]
\centering
\begin{tabular}{ccrr}
  \hline
$m$ & no. of parameters & AIC & BIC \\ 
  \hline
1 & 4 & 259.4 & 227.6 \\ 
  2 & 9 & 8.0 & 0 \\ 
  3 & 14 & 2.8 & 18.5 \\ 
  4 & 19 & 6.5 & 46.0 \\ 
  5 & 24 & 0 & 63.3 \\ 
   \hline
\end{tabular}
\caption{Model comparison for the twin BMI data, where $m$ is the number of mixture components and $5m-1$ is the number of parameters in the model.
AIC and BIC values are relative to the best fitting models (respectively, $m=5$ and $m=2$).
\label{Table: twin data}}
\end{table}

Table~\ref{Table:estimation2} shows the parameter estimates.
The first mixture component is dominating with $p_1=0.81$.
For MZ twins there is high correlation ($\rho_k$) within in each of the
two components, while for DZ twins  $\rho_2$ is close to zero.
The (global) correlations for the mixtures as a whole,
matches exactly the empirical Pearson correlations,
which are $0.78$ (MZ) and $0.30$ (DZ), respectively.

\begin{table}[!htp]
\centering
\begin{tabular}{crrr}
  \hline
Parameters & $k=1$ & $k=2$ & Global \\ 
  \hline
$\mu_k$ & 21.20 & 22.20 & 21.39 \\ 
$\sigma_k$ &  0.63 & 1.26 & 0.88 \\ 
$\rho^{(MZ)}_k$ &  0.75 & 0.70 & 0.78 \\ 
$\rho^{(DZ)}_k$ &  0.28 & $-$0.04 & 0.30 \\ 
$p_k$ &  0.81 & 0.19 &  \\ 
   \hline
\end{tabular}\caption{Parameter estimates for the chosen Gaussian mixture ($m=2$) for the twin data. The mixture components are ordered according to the value of $\sigma_k$. The global quantities, $\mu$, 
$\sigma$, $\rho^{(MZ)}$ and $\rho^{(DZ)}$ are calculated from~\eqref{Globalsigma_mean}. }
\label{Table:estimation2}
\end{table}

\begin{figure}[!htp] 
        \centering
\includegraphics{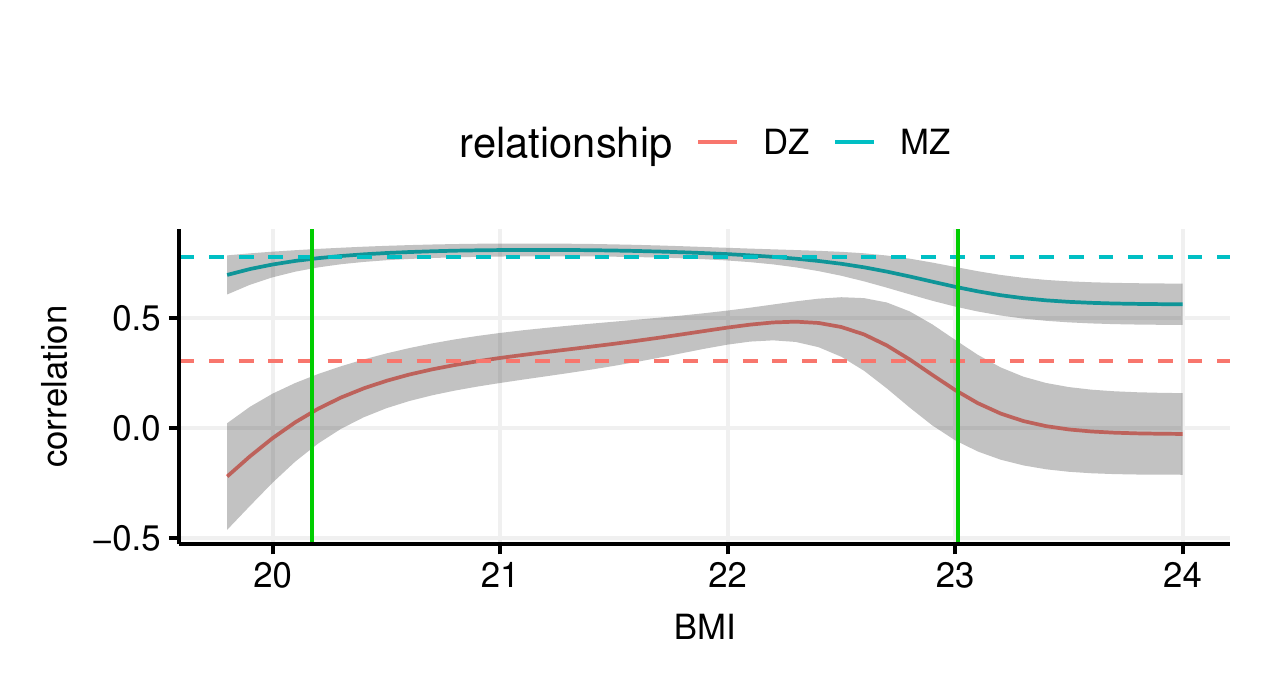}
       \caption{Estimated monozygotic (MZ) and dizygotic (DZ) twins correlation curves for the BMI data, with pointwise 95\% confidence
	   intervals (in grey). The dashed lines display the (overall) Pearson correlation within MZ and DZ twin pairs, respectively. The vertical green lines represent the $0.05$ and $0.95$ quantiles of the data.}
        \label{fig:correlation_twins}
\end{figure}

Figure~\ref{fig:correlation_twins} displays the estimated correlation curve for both MZ and DZ twins, using the parameter values from Table~\ref{Table:estimation2}. Also shown are 95\% confidence intervals calculated using the delta method. 
Both correlation curves are fairly flat within the center 90\% data range (represented by
the two vertical green bars), while they both drop for low and high BMI. 
This yields (Figure~\ref{fig:herandenv_twins}) an estimated heritability curve $a^2(y)$ that does not differ significantly (except maybe around $y=22.3$) from the classical heritability coefficient~\eqref{eq:ADE_moment}.

The TMB (R and C++) code used to produce the parameter estimates in Table~\ref{Table:estimation2} plots in Figure~\ref{fig:herandenv_twins} is available from \url{https://github.com/skaug/Supplementary}.

\begin{figure}[ht] 
        \centering
\includegraphics{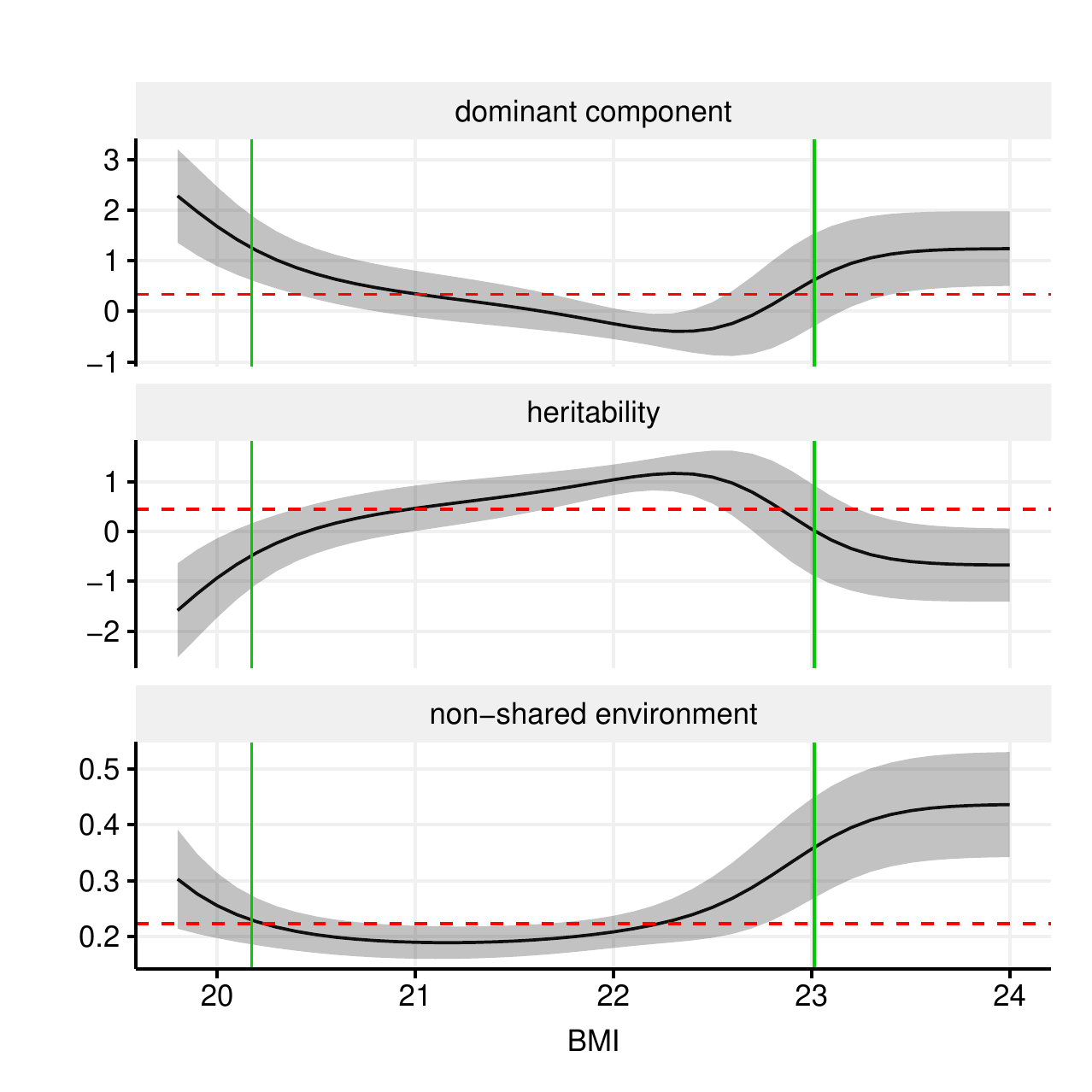}
       \caption{Estimated dominant genetic component $d^2(y)$, heritability curve $a^2(y)$,  and environment curve $c^2(y)$ for the BMI data under the ADE model (Definition~\ref{def:heritability_curve_twins}), with pointwise 95\% confidence
	   intervals (in grey). The red dashed lines display the classical estimates of dominant component, heritability, and environment, given by~\eqref{eq:ADE_moment}. The vertical green lines represent the $0.05$ and $0.95$ quantile in data.}
        \label{fig:herandenv_twins}
\end{figure}

\subsection{Birth weight of family trios}\label{subsec:BW}

To illustrate the family trio analyses, we used birth weights of $n=81,144$ complete mother--father--child trios. The data originally derived from the Medical Birth Registry of Norway, where the birth weight variables were added some random noise and rounded off to guarantee anonymity. The same data with some additional restrictions on parity, plurality, etc. were previously described and analyzed elsewhere~\citep{Magnus01}. The data were restricted to all births (mother, father, and child) taking place within the years 1967--1998.
Due to Norwegian ethical and legal restrictions, Norwegian data used in this study are available upon request to the Medical Birth Registry of Norway, the Norwegian Institute of Public Health. URL: https://www.fhi.no/hn/helseregistre-og-registre/mfr. Requests for data access can be directed to Datatilgang@fhi.no<mailto:Datatilgang@fhi.no>.

We did not have information about the gender of the child; hence, we performed a standardization of the data. We assumed a $50 \% $ sex ratio in the offspring, and introduced the quantity $D \triangleq \frac{1}{2} \left( \bar{y}_M - \bar{y}_F \right)$, where $\bar{y}_M$ is the mean of the birth weights of mothers, and $\bar{y}_F$ is the mean of the birth weights of fathers. We hence added $D$ to the father's weight and subtracted it to the mother's weight; in this way, the average among mothers and fathers is the same, and close (25g deviation) to the average in the offspring. This standardization is of little consequence to the end result.

Figure~\ref{fig:scatmat} summarizes the marginal and bivariate properties of the data. The marginal distributions are close to a Gaussian shape, but the left tail of the child birth weights is slightly heavier than the right tail. As suggested in the Introduction, this may be indicative of strong but rare factors dominating in producing the lowest birth weigths, which is what we will confirm in our analyses of local heritability below.

The scatter plots are roughly symmetric around the identity line, which is consistent with the exchangeability assumption made in Section~\ref{section:corr}. It should be noted, however, that the left hand tail of the marginal distributions is somewhat heavier in the children than in the parents; this is likely because parents are selected by the fact that they have children; it is known that individuals born with low birth weight have somewhat reduced fertility later in life. We have, however, not taken this into consideration in our model.

From the non-parametric regression (blue curve), it is clear that
there is no association between mother and father, which is reflected in the low Pearson correlation of $0.0209$.
For the two relationships involving the child, the non-parametric regression curve indicates a non-linear relationship,
particularly for mother-child. For birth weights less than 3000g there seems to be a low association, while for
larger birth weights the association is increasing.

 The Gaussian mixture (\ref{eq:gaussianmixture}) was fit by maximum likelihood for $m=1,\ldots,7$. We computed both AIC and BIC values for this model. According to the BIC criterion, the best fitting mixture has $m=4$ components (see Table~\ref{Table:trios_data}). Parameters estimates for this model are given in Table~\ref{Table:estimation4}. Figure \ref{fig:ellipses} shows the underlying mother-child pairs, overlaid by the five mixture components. 
 
 \begin{figure}[!htp] 
        \centering
\includegraphics{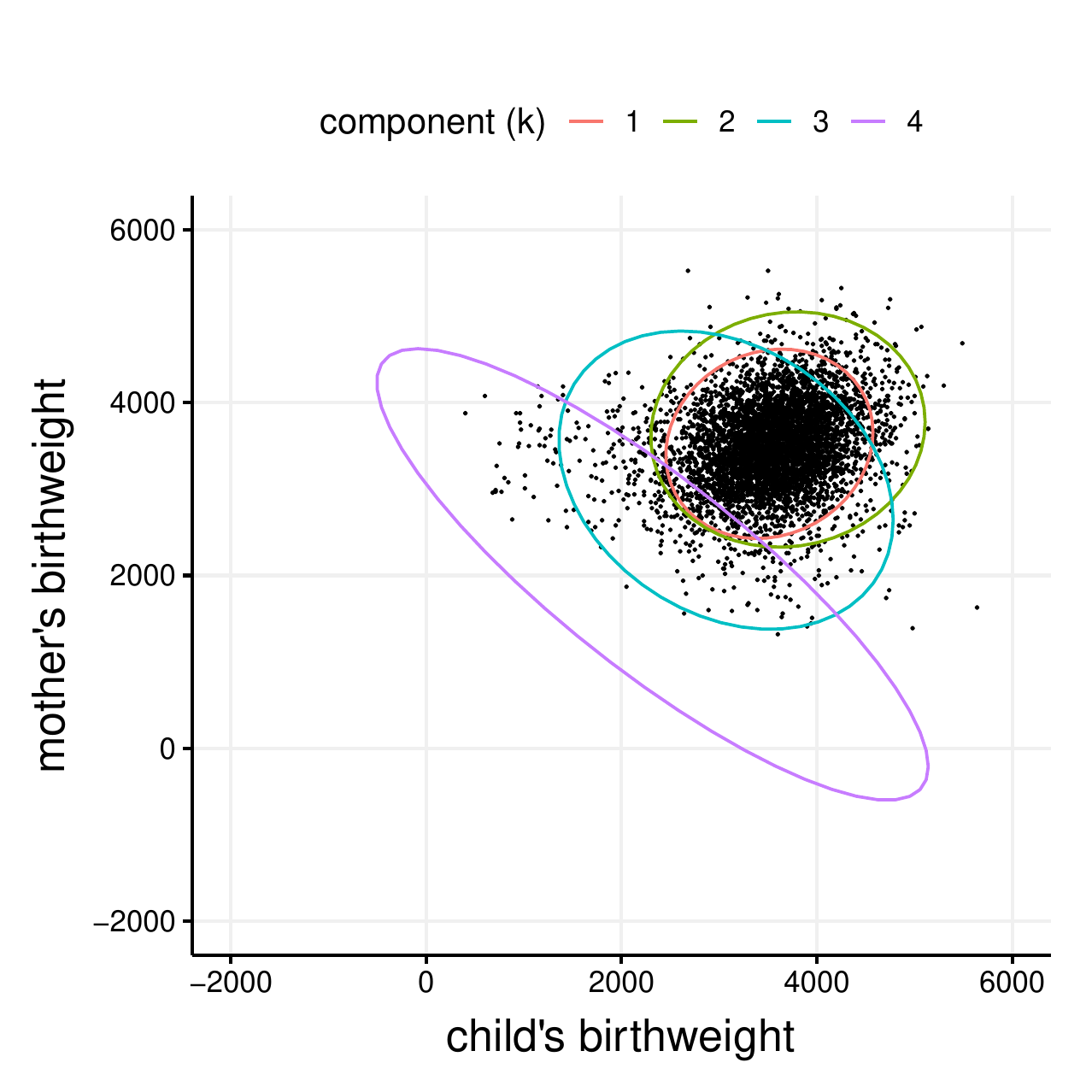}
       \caption{Birth weight (gram) of a random subset of $5000$ mother-child pairs taken
from Figure~\protect\ref{fig:scatmat}. Also shown are 95\% level curves (ellipses)
for each of the $m=4$ mixture components in Table~\protect\ref{Table:estimation4},
i.e.~each elipse include 95\% of the probability mass for that bivariate normal
component.
}
        \label{fig:ellipses}
\end{figure}

\begin{table}[ht]
\centering
\begin{tabular}{c c c c }
\hline
$m$ & no. parameters& $\Delta$ AIC  &$\Delta$ BIC  \\ \hline
$1$ & $5$  & 14848  & 14749 \\
  2 & 11 & 1148 & 904.4 \\ 
  3 & 17 & 480.4 & 292.5 \\ 
  4 & 23 & 132.1 & 0 \\ 
  5 & 29 & 109.7 & 33.5 \\ 
  6 & 35 & 36.3 & 16.0 \\ 
  7 & 41 & 0 & 35.5 \\ \hline
\end{tabular}
\caption{Model comparison for family trios, where $m$ is the number of mixture components. The total number of (free) parameters is $6m-1$,
counting all $p_k$, $\mu_k$, $\sigma_k$, $\rho_k^{(MC)}$, $\rho_k^{(FC)}$ and $\rho_k^{(MF)}$. 
AIC  and BIC values  are relative to the lowest one, represented in red. \label{Table:trios_data}}
\end{table}

The mother-child distribution is pear-shaped relative to a bivariate normal distribution,
with more spread around the identity line ($y_1=y_2$) for small
birth weights. The mixture model adapts to this shape by assigning
negative $\rho_k$'s to its two components ($k=3,4$) with the smallest $\mu_k$. The remaining two components ($k=1,2$),
which together constitute 87\% of the probability mass, form a bivariate 
distribution that is hard to distinguish visually from a Gaussian distribution.
The estimates of global correlation for the mixture in~Table~\ref{Table:trios_data},  closely match
the corresponding empirical Pearson correlations given in Figure~\ref{fig:scatmat} for MC, FC and MF pairs.
It is seen to fit the empirical marginals fairly well, and to posses a heavier left hand tail.  

\begin{table}[htp]
\centering
\begin{tabular}{crrrrr}
  \hline
Parameters & $k=1$ & $k=2$ & $k=3$ & $k=4$ & Global \\ 
  \hline
$\mu_k$ & 3516 & 3687& 3093 & 2243 & 3493 \\ 
$\sigma_k$ &  440.5 & 572.9 & 690.5 & 1116 & 555.0 \\ 
 $\rho^{(MC)}_k$ & 0.240 & 0.143 & $-$0.189 & $-$0.826 & 0.123 \\ 
 $\rho^{(FC)}_k$ & 0.134 & 0.053 & $-$0.254 & $-$0.845 & 0.201 \\ 
 $\rho^{(MF)}_k$ & $-$0.011 & $-$0.084 & $-$0.289 & 0.750 & 0.068	 \\ 
$p_k$ &  0.636 & 0.231 & 0.126 & 0.007 &  \\ 
   \hline
\end{tabular}\caption{Parameter estimates and standard deviations for the Gaussian mixture ($m=4$) fit to the mother–father–child trios. 
The mixture components are ordered according to the value of $\sigma_k$. The global quantities, $\mu$, 
$\sigma$, $\rho^{(MC)}$, $\rho^{(FC)}$ and $\rho^{(MF)}$ are calculated from~\eqref{Globalsigma_mean}. }
\label{Table:estimation4}
\end{table}

Figure \ref{fig:correlations} shows the two estimated correlation curves $\rho^{(FC)}(y)$ and $\rho^{(MC)}(y)$,
which are the components going into $a^2(y)$, $c^2(y)$, and $e^2(y)$, given respectively by \eqref{eq:a(y)_triplets}--\eqref{eq:tm(y)_triplets}.
Overall, the Pearson correlation and the correlation curves for MF exceed those for FC.
Both curves exceed their respective Pearson	 correlations in the center of the data, while they decrease 
for both low and high birth weights. The FC curve has its maximum somewhat to the left of the maximum of the MC curve. As a robustness check, we also computed the local Gaussian correlations ~\citep{Tjostheim2013} between mother and child as displayed in Figure \ref{fig:locgauss}. These exhibit the same behaviour as the correlation curve;  large values in the center of the data which are decreasing towards both tails.
Figure \ref{fig:herandenv} shows heritability and environment curves. 
The overall conclusion is that variation in birth weight is mostly attributable to environment,
which was also seen in previous publications~\citep{Magnus01,lunde_genetic_2007,Gjessing08}, and is reflected in the classical measures of heritability $a^2=0.246$ and environment $c^2=0.754$,
and the variation in the corresponding curves.

\begin{figure}[H] 
        \centering
\includegraphics{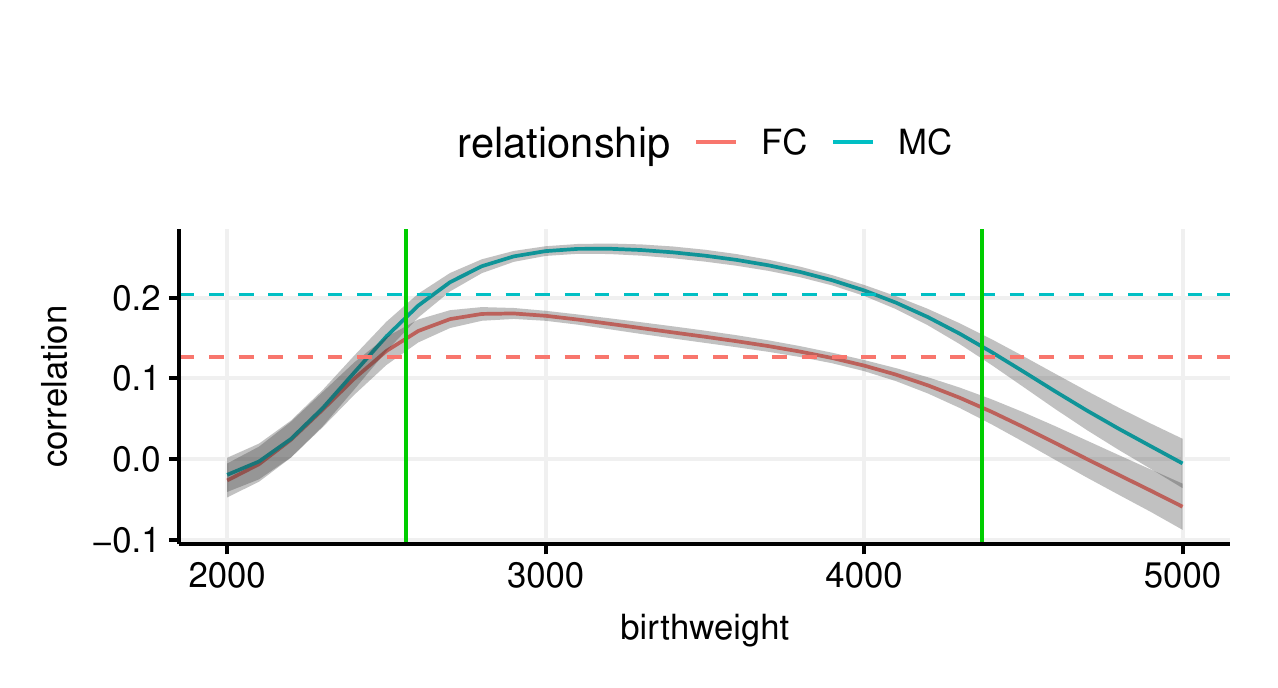}
       \caption{Estimated mother-child (MC) and father-child (FC) correlation curves for the Norwegian Birth Registry data, with pointwise 95\% confidence
	   intervals (in grey). The dashed lines display the (overall) Pearson correlation within MC and FC pairs, respectively. The vertical green lines represent the $0.05$ and $0.95$ quantiles of the data.}
        \label{fig:correlations}
\end{figure}

\begin{figure}[H]
       \centering
\includegraphics[scale=1]{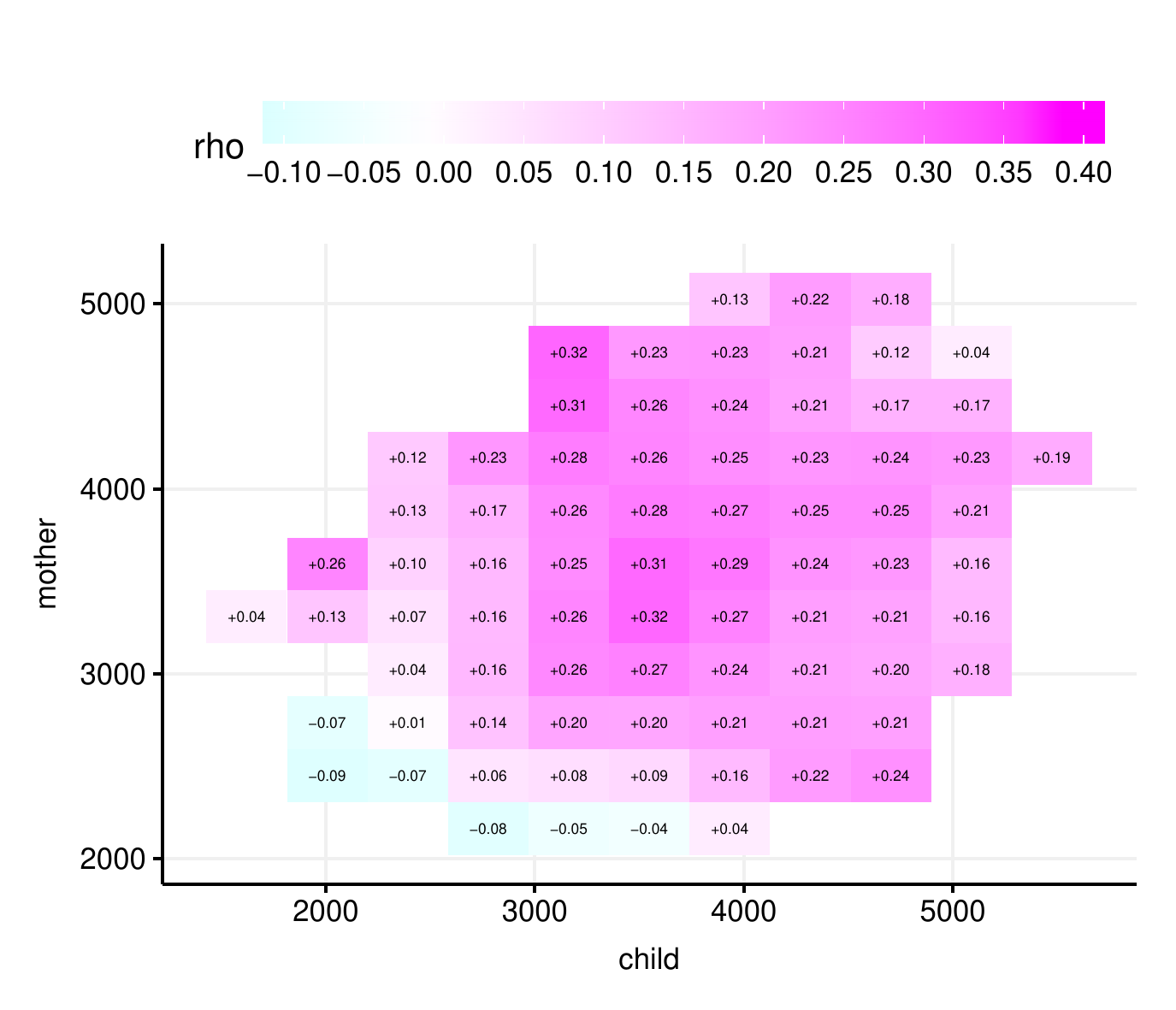}
      \caption{Estimated local Gaussian correlation between mother and child. Note that this correlation measure has two location arguments ($y_1$ and $y_2$). }
       \label{fig:locgauss}
\end{figure}

Recall that, under the assumed model \eqref{eq:ACE_MFC} the heritability curve $a^2(y)$ is completely determined by the FC correlation curve $\rho^{(FC)}(y)$. Since the FC correlation curve exceeds the Pearson FC correlation in the center of the data, the heritability curve also exceeds the classical heritability measure in the same region.

\begin{figure}[ht] 
        \centering
\includegraphics{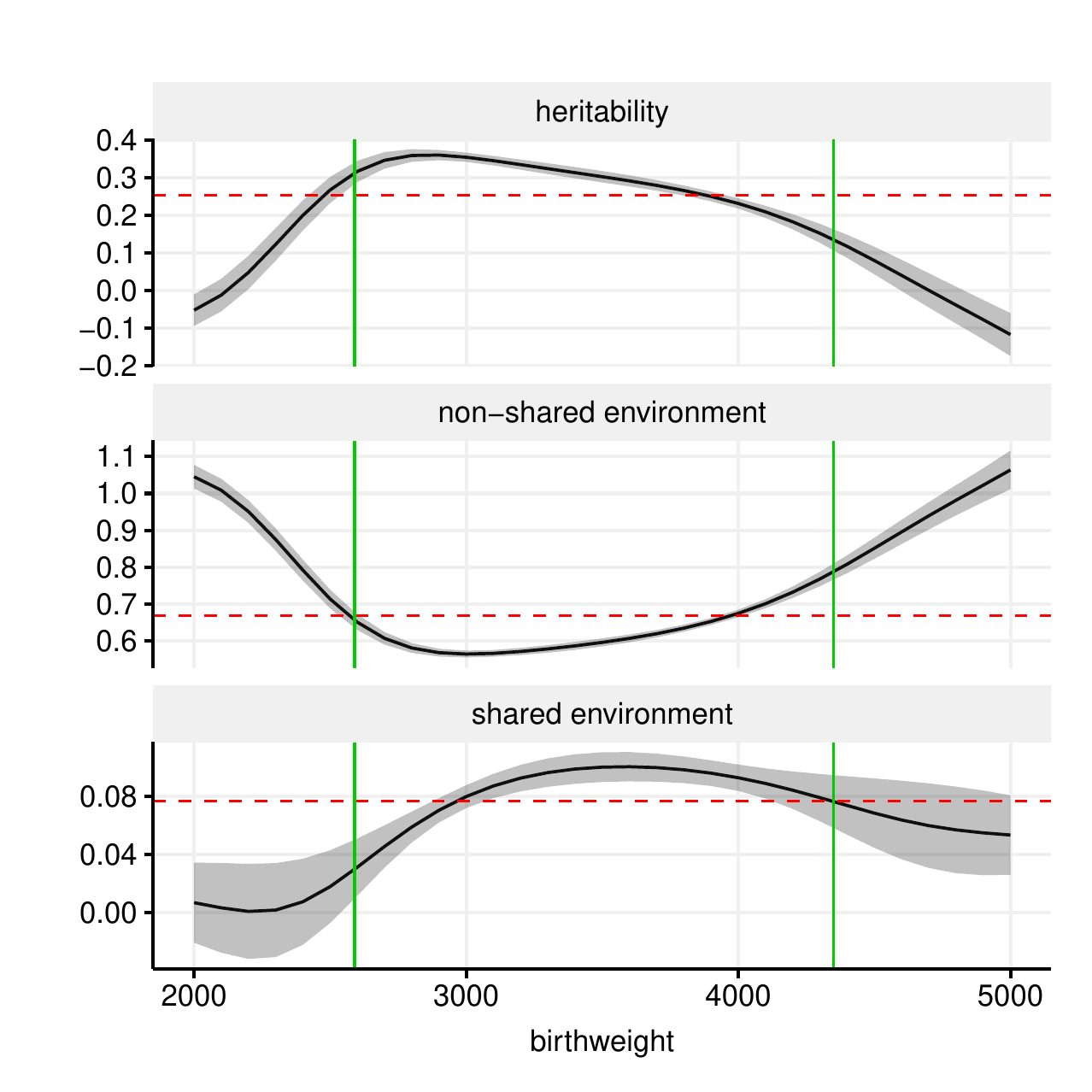}
       \caption{Estimated heritability curve $a^2(y)$, environment curve $c^2(y)$, and
       residual environment $e^2(y)$ for the Norwegian Birth Registry data under the ACE model (Definition~\ref{def:heritability_curve_trios}), with pointwise 95\% confidence
	   intervals (in grey). The red dashed lines display the classical
estimates of heritability and environment, i.e.~empirical 
versions of~\eqref{eq:ACE_MFC}. The vertical green lines represent the $0.05$ and $0.95$ quantiles of the data.}
        \label{fig:herandenv}
\end{figure}

\section{Discussion and conclusion}
We have provided closed-form expressions for the correlation curve
for exchangeable bivariate Gaussian mixtures. To our knowledge, this result is new and should be useful generally in situations where exchangeability 
can be assumed. Since differences in mean values may accounted for 
using a linear predictor like~\eqref{eq:multmean}, it is only 
exchangeability of the residuals, or the weaker condition~\eqref{def:exchangability}, that is required. In the context of our family data, the exchangeability assumption is rather
reasonable for twin data. In nuclear families, it is less obvious that parents and children have the exact same marginal distribution even when using covariates to adjust for systematic generational differences. With our generational birth weight data, we observe that the left hand tail in the parental distribution is smaller than among the children. As discussed in Subsection~\ref{subsec:BW}, this may well be a selection phenomenon; somebody born with a very low birth weight is less likely to become a parent, and are thus possibly under-represented in our data file. For instance, increased mortality among the smallest newborns is thought to lead to a selection pressure on the birth weight distribution over generations~\citep{in_cavalli-sforza_genetics_1999}.

A restriction of our model is that we have applied it only in situations with simple family structures where moment estimators of the heritability are explicit. In larger family structures, several pairwise relationships may provide information about the same heritability parameters. For instance, family trios with sibling data add the sibling correlation as a source of information~\citep{lunde_genetic_2007}. We will not discuss that issue further, but note that if pairwise correlation curves are estimated from larger data structures, weighted least squares estimation may provide a way of combining them into a common estimate of heritability curves~\citep{Gjessing08}.

In our twin BMI example, we chose the ADE model for the estimation since for the estimated overall correlations, $\rho^{(MZ)}>2\rho^{(DZ)}$. However, as seen in Figure~\ref{fig:herandenv_twins}, there are values for $y$ (the BMI) where the estimated $d^2(y)$ drops below zero. This indicates that in this region, the ACE model might be more appropriate. Note that there is no difficulty in letting the local heritability curves switch from an ADE model to an ACE model locally. In particular, we see that when $\rho^{(MZ)}=2\rho^{(DZ)}$, both (\ref{eq:Falconer}) and (\ref{eq:ADE_moment}) provide the same estimates for $a^2$ and $e^2$, and both $c^2$ and $d^2$ are estimated as zero. The estimated heritability curves would thus still be continuous if switching from one model to another.

The choice of Gaussian mixtures was made due to their flexibility, in the 
spirit of non-parametric estimation. Our approach is pragmatic in the sense that we have not attempted to interpret individual mixture components as sub-populations. One reason for this
is the negative estimates for some of the $\rho_k$ seen in both Table~\ref{Table:estimation2} 
and~\ref{Table:estimation4}, which would be hard to interpret biologically.  

On the other hand, Gaussian mixtures are fully parametric models, which
allows us to use the standard parametric toolbox. For instance, covariates
can easily enter the mean, as in~\eqref{eq:multmean}, and it would
also be straight forward to formulate model in which
the $\sigma_k$ were affect by family level covariates. 
A further benefit of having a parametric model is that we can select model complexity ($m$) based on standard AIC or BIC criteria. 

The parametric structure is also the basis for the results about the tail 
behaviour of the correlation curve in~Theorem~\ref{asymptotic}. While the center of 
the distribution may have sufficient data to allow stable non-parametric estimation of the heritability, the estimates in the tails are more dependent on the model structure. This is both a strength and a weakness of the mixture model. The heritability curves converge to constant values in the tails, which makes the estimates more stable; on the other hand, those estimates depend on the dominant mixture components in the tails, and the number and placement of mixture components may not always be clear cut.


There are also well known problems with Gaussian mixtures. Among these are local 
maxima on the likelihood surface ~\citep{baudry2015mixtures}, which can be explored by using different initial values for the numerical optimization.
We avoided the classical ``label switching'' problem by constraining the parameters
of the mixture ($\sigma$'s and $\mu$'s), but have nevertheless observed
some sensitivity of the parameter estimates in Table~\ref{Table:estimation4}. 
Although we cannot guarantee that we have found the global
optimum of the likelihood surface, the choice of model complexity ($m$) seems
to be robust to the choice of initial values. Similarly, the shape of the correlation curves (and consequently heritability and environment curves) are quite stable. 
A related problem is that of singlularity of the Fisher information matrix which
can occur for mixture models~\citep{drton2017bayesian}. This could potentially 
affect the validity of AIC and BIC criteria, as well as the standard deviations
based on the observed Fisher information that have been used throughout this paper.
Such standard deviations are produced automatically by TMB, and are very convenient in an exploratory phase, but we recommend that they
are validated by simulation (parametric bootstrap). 

\section{Acknowledgements}
This research was supported by Research Council of Norway grant 225912/F50 “Health Registries for Research” and the Centres of Excellence funding scheme (Grant 262700).

\clearpage
\bibliographystyle{apalike}
\bibliography{heritability_curves}

\begin{thebibliography}{}

\bibitem[Baudry and Celeux, 2015]{baudry2015mixtures}
Baudry, J.-P. and Celeux, G. (2015).
\newblock Em for mixtures.
\newblock {\em Statistics and computing}, 25(4):713--726.

\bibitem[Bender and Orszag, 2013]{bender_advanced_2013}
Bender, C.~M. and Orszag, S.~A. (2013).
\newblock {\em Advanced {Mathematical} {Methods} for {Scientists} and
  {Engineers} {I}: {Asymptotic} {Methods} and {Perturbation} {Theory}}.
\newblock Springer Science \& Business Media.
\newblock Google-Books-ID: xz0mBQAAQBAJ.

\bibitem[Bjerve and Doksum, 1993]{bjerve93}
Bjerve, S. and Doksum, K. (1993).
\newblock Correlation curves: Measures of association as functions of covariate
  values.
\newblock {\em Ann. Statist.}, 21(2):890--902.

\bibitem[Bulmer, 1985]{bulmer_mathematical_1985}
Bulmer, M.~G. (1985).
\newblock {\em The {Mathematical} {Theory} of {Quantitative} {Genetics}}.
\newblock Clarendon Press.

\bibitem[Cavalli-Sforza and Bodmer, 1999]{in_cavalli-sforza_genetics_1999}
Cavalli-Sforza, L.~L. and Bodmer, W.~F. (1999).
\newblock {\em The {Genetics} of {Human} {Populations}}, pages 612--614.
\newblock Courier Corporation.

\bibitem[Cherny et~al., 1992a]{cherny1992differential}
Cherny, S., Cardon, L., Fulker, D.~W., and DeFries, J. (1992a).
\newblock Differential heritability across levels of cognitive ability.
\newblock {\em Behavior genetics}, 22(2):153--162.

\bibitem[Cherny et~al., 1992b]{cherny_differential_1992}
Cherny, S.~S., Cardon, L.~R., Fulker, D.~W., and DeFries, J.~C. (1992b).
\newblock Differential heritability across levels of cognitive ability.
\newblock {\em Behavior Genetics}, 22(2):153--162.

\bibitem[DeFries and Fulker, 1985]{defries1985multiple}
DeFries, J.~C. and Fulker, D.~W. (1985).
\newblock Multiple regression analysis of twin data.
\newblock {\em Behavior genetics}, 15(5):467--473.

\bibitem[DeFries and Fulker, 1988]{defries1988multiple}
DeFries, J.~C. and Fulker, D.~W. (1988).
\newblock Multiple regression analysis of twin data: Etiology of deviant scores
  versus individual differences.
\newblock {\em Acta Geneticae Medicae et Gemellologiae: Twin Research},
  37(3-4):205--216.

\bibitem[Doksum et~al., 1994]{doksum_correlation_1994}
Doksum, K., Blyth, S., Bradlow, E., Meng, X.-L., and Zhao, H. (1994).
\newblock Correlation {Curves} as {Local} {Measures} of {Variance} {Explained}
  by {Regression}.
\newblock {\em Journal of the American Statistical Association},
  89(426):571--582.

\bibitem[Drton and Plummer, 2017]{drton2017bayesian}
Drton, M. and Plummer, M. (2017).
\newblock A bayesian information criterion for singular models.
\newblock {\em Journal of the Royal Statistical Society: Series B (Statistical
  Methodology)}, 79(2):323--380.

\bibitem[Falconer, 1960]{falconer60}
Falconer, D.~S. (1960).
\newblock {\em Introduction to quantitative genetics}.
\newblock Oliver And Boyd; Edinburgh; London.

\bibitem[Fisher, 1919]{fisher_xvcorrelation_1919}
Fisher, R.~A. (1919).
\newblock {XV}.—{The} {Correlation} between {Relatives} on the {Supposition}
  of {Mendelian} {Inheritance}.
\newblock {\em Earth and Environmental Science Transactions of The Royal
  Society of Edinburgh}, 52(2):399--433.
\newblock Publisher: Royal Society of Edinburgh Scotland Foundation.

\bibitem[Gjessing and Lie, 2008]{Gjessing08}
Gjessing, H.~K. and Lie, R.~T. (2008).
\newblock Biometrical modelling in genetics: are complex traits too complex?
\newblock {\em Statistical Methods in Medical Research}, 17(1):75--96.
\newblock PMID: 17855744.

\bibitem[Holland and Wang, 1987]{Holland1987}
Holland, P.~W. and Wang, Y.~J. (1987).
\newblock Dependence function for continuous bivariate densities.
\newblock {\em Communications in Statistics - Theory and Methods},
  16(3):863--876.

\bibitem[Hopper, 2002]{Hopper2002}
Hopper, J.~L. (2002).
\newblock {Heritability}.
\newblock In Elston, R., Olson, J., and Palmer, L., editors, {\em
  Biostatistical Genetics and Genetic Epidemiology}, Wiley reference series in
  biostatistics, pages {371--372}. Wiley, West Sussex, UK.

\bibitem[Hopper and Visscher, 2002]{HopperVisscher2002}
Hopper, J.~L. and Visscher, P.~M. (2002).
\newblock {Genetic Correlations and Covariances}.
\newblock In Elston, R., Olson, J., and Palmer, L., editors, {\em
  Biostatistical Genetics and Genetic Epidemiology}, Wiley reference series in
  biostatistics, pages {327--331}. Wiley, West Sussex, UK.

\bibitem[Khoury et~al., 1993]{khoury_fundamentals_1993}
Khoury, M.~J., Beaty, T.~H., and Cohen, B.~H. (1993).
\newblock {\em Fundamentals of {Genetic} {Epidemiology}}.
\newblock Oxford University Press.

\bibitem[Kristensen et~al., 2016]{kristensen2016tmb}
Kristensen, K., Nielsen, A., Berg, C.~W., Skaug, H., and Bell, B.~M. (2016).
\newblock Tmb: Automatic differentiation and laplace approximation.
\newblock {\em Journal of Statistical Software}, 70(1):1--21.

\bibitem[LaBuda et~al., 1986]{labuda1986multiple}
LaBuda, M.~C., DeFries, J., Fulker, D.~W., and Rao, D. (1986).
\newblock Multiple regression analysis of twin data obtained from selected
  samples.
\newblock {\em Genetic epidemiology}, 3(6):425--433.

\bibitem[Logan et~al., 2012a]{logan2012heritability}
Logan, J.~A., Petrill, S.~A., Hart, S.~A., Schatschneider, C., Thompson, L.~A.,
  Deater-Deckard, K., DeThorne, L.~S., and Bartlett, C. (2012a).
\newblock Heritability across the distribution: An application of quantile
  regression.
\newblock {\em Behavior genetics}, 42(2):256--267.

\bibitem[Logan et~al., 2012b]{logan_heritability_2012}
Logan, J.~A., Petrill, S.~A., Hart, S.~A., Schatschneider, C., Thompson, L.~A.,
  Deater-Deckard, K., DeThorne, L.~S., and Bartlett, C. (2012b).
\newblock Heritability {Across} the {Distribution}: {An} {Application} of
  {Quantile} {Regression}.
\newblock {\em Behavior genetics}, 42(2):256--267.

\bibitem[Lunde et~al., 2007]{lunde_genetic_2007}
Lunde, A., Melve, K.~K., Gjessing, H.~K., Skj{\ae}rven, R., and Irgens, L.~M.
  (2007).
\newblock Genetic and {Environmental} {Influences} on {Birth} {Weight}, {Birth}
  {Length}, {Head} {Circumference}, and {Gestational} {Age} by {Use} of
  {Population}-based {Parent}-{Offspring} {Data}.
\newblock {\em American Journal of Epidemiology}, 165(7):734--741.

\bibitem[Magnus et~al., 2001]{Magnus01}
Magnus, P., Gjessing, H.~K., Skrondal, A., and Skj{\ae}rven, R. (2001).
\newblock Paternal contribution to birth weight.
\newblock {\em Journal of Epidemiology \& Community Health}, 55(12):873--877.

\bibitem[McCulloch and Neuhaus, 2001]{mcculloch01}
McCulloch, C.~E. and Neuhaus, J.~M. (2001).
\newblock {\em Generalized linear mixed models}.
\newblock Wiley Online Library.

\bibitem[McLachlan and Peel, 2000]{mclachlan2000fmm}
McLachlan, G. and Peel, D. (2000).
\newblock {\em {Finite mixture models}}.
\newblock Wiley New York.

\bibitem[Neale, 2002]{Neale2002}
Neale, M.~C. (2002).
\newblock {Twin Analysis}.
\newblock In Elston, R., Olson, J., and Palmer, L., editors, {\em
  Biostatistical Genetics and Genetic Epidemiology}, Wiley reference series in
  biostatistics, pages {206--217}. Wiley, West Sussex, UK.

\bibitem[Neale et~al., 2016]{neale2016openmx}
Neale, M.~C., Hunter, M.~D., Pritikin, J.~N., Zahery, M., Brick, T.~R.,
  Kirkpatrick, R.~M., Estabrook, R., Bates, T.~C., Maes, H.~H., and Boker,
  S.~M. (2016).
\newblock Openmx 2.0: Extended structural equation and statistical modeling.
\newblock {\em Psychometrika}, 81(2):535--549.

\bibitem[Pawitan et~al., 2004]{pawitan2004}
Pawitan, Y., Reilly, M., Nilsson, E., Cnattingius, S., and Lichtenstein, P.
  (2004).
\newblock Estimation of genetic and environmental factors for binary traits
  using family data.
\newblock {\em Statistics in Medicine}, 23:449--465.

\bibitem[Rabe-Hesketh et~al., 2008]{rabe-hesketh_biometrical_2008}
Rabe-Hesketh, S., Skrondal, A., and Gjessing, H. (2008).
\newblock Biometrical modeling of twin and family data using standard mixed
  model software.
\newblock {\em Biometrics}, 64(1):280--288.

\bibitem[Tj{\o}stheim and Hufthammer, 2013]{Tjostheim2013}
Tj{\o}stheim, D. and Hufthammer, K.~O. (2013).
\newblock Local gaussian correlation: A new measure of dependence.
\newblock {\em Journal of Econometrics}, 172(1):33 -- 48.

\bibitem[Wright, 1920]{wright_relative_1920}
Wright, S. (1920).
\newblock The relative importance of heredity and environment in determining
  the piebald pattern of guinea-pigs.
\newblock {\em Proceedings of the National Academy of Sciences of the United
  States of America}, 6(6):320--332.
\newblock Publisher: National Academy of Sciences.

\bibitem[Wright, 1921]{wright21}
Wright, S. (1921).
\newblock Correlation and causation.
\newblock {\em Journal of agricultural research}, 20(7):557--585.

\end{thebibliography}

\clearpage

\appendix
\section{Proofs \label{app1}}

\begin{proof}
[Proof of Proposition \ref{derivative}]
Let $g(y)$, $g_{k}(y)$, $p_{k}^{*}(y)$ etc.~be defined as in Section~\ref{section:Gauss}. First,
note that
\[
\frac{g_{k}'(y)}{g_{k}(y)}=d_{k}(y).
\]
Furthermore, define
\[
d(y):=\sum_{i=1}^{m}p_{i}^{*}(y)d_{i}(y),
\]
i.e.~the weighted average of the $d_{i}(y)$'s. Then
\[
\frac{g'(y)}{g(y)}=\frac{\sum_{i=1}^{m}d_{i}(y)g_{i}(y)}{g(y)}=d(y).
\]
For any fraction $s(y)=a(y)/b(y)$ of differentiable functions, note
that the chain rule can be written as $\frac{s'(y)}{s(y)}=\frac{a'(y)}{a(y)}-\frac{b'(y)}{b(y)}$.
Thus,

\[
\frac{p_{k}^{*'}(y)}{p_{k}^{*}(y)}=\frac{g_{k}'(y)}{g_{k}(y)}-\frac{g'(y)}{g(y)}=d_{k}(y)-d(y).
\]
Recall from (\ref{expectation}) that $\mu(y)=\E\left[Y_{1}\mid Y_{2}=y\right]=\sum_{i=1}^{m}p_{i}^{*}(y)\mu_{i}(y)$
is the conditional expectation, 
\begin{align*}
\beta(y)=\mu'(y) & =\sum_{i=1}^{m}\left(p_{i}^{*}(y)\mu{}_{i}'(y)+p_{i}^{*'}(y)\mu_{i}(y)\right)\\
 & =\sum_{i=1}^{m}p_{i}^{*}(y)\left(\rho_{i}+\mu_{i}(y)\left(d_{i}(y)-d(y)\right)\right)\\
 & =\sum_{i=1}^{m}p_{i}^{*}(y)\left(\rho_{i}+\left(\mu_{i}(y)-\mu(y)\right)\left(d_{i}(y)-d(y)\right)\right)\\
 & =\sum_{i=1}^{m}p_{i}^{*}(y)\left(\rho_{i}+\left(\mu_{i}(y)-\mu(y)\right)d_{i}(y)\right),
\end{align*}
where we make use of $\sum_{i=1}^{m}p_{i}^{*}(y)\left(d_{i}(y)-d(y)\right)=0$ and  $\sum_{i=1}^{m}p_{i}^{*}(y)\left(\mu_{i}(y)-\mu(y)\right)=0$.

\end{proof}

\subsubsection{Proof of Theorem \ref{asymptotic} - asymptotic behavior of $\beta(y)$, $\sigma^2(y)$, and $\rho(y)$} \label{app:beta}

For two functions $a(y)$ and $b(y)$, as $y\to\infty$ (or $-\infty$),
we use the standard notation that $a(y)\sim b(y)$ means $\lim_{y\to\infty}a(y)/b(y)=1$,
and $a(y)\ll b(y)$ means $\lim_{y\to\infty}a(y)/b(y)=0$. Our proofs below follow mostly from standard theory on asymptotic behavior of real functions\cite{bender_advanced_2013}.

\paragraph{Asymptotic behavior of mixture components}

For one mixture component $g_{k}(y)$, the asymptotic behavior when
$y\to\pm\infty$ is 
\[
g_{k}(y)\sim C_{k}\exp\left(\frac{\mu_{k}}{\sigma_{k}}y-\frac{1}{2\sigma_{k}^{2}}y^{2}\right),
\]
for a constant $C_{k}$. Comparing two components $g_{k}(y)$ and
$g_{l}(y)$ with $\sigma_{k}^{2}<\sigma_{l}^{2}$, we clearly have
\begin{equation}
g_{k}(y)\ll g_{l}(y)\quad\text{as}\quad y\to\pm\infty\label{eq:pm}
\end{equation}
since the $y^{2}$-term dominates the asymptotics. If $\sigma_{k}^{2}=\sigma_{l}^{2}$,
assume that $\mu_{k}<\mu_{l}$. Then
\begin{equation}
g_{k}(y)\ll g_{l}(y)\quad\text{as}\quad y\to+\infty,\label{eq:p}
\end{equation}
and
\begin{equation}
g_{l}(y)\ll g_{k}(y)\quad\text{as}\quad y\to-\infty.\label{eq:m}
\end{equation}
Let $a_{k}(y)$ be non-zero polynomial functions in $y$ for $k=1,\ldots,m$.
Since polynomials are asymptotically dominated by exponentials of
polynomials, the products $g_{k}(y)a_{k}(y)$ are asymptotically ordered
in the same way as in (\ref{eq:pm}), (\ref{eq:p}), and (\ref{eq:m})
above.

\paragraph{Asymptotic behavior of mixtures}

Recall the definition of $K$ in Theorem~\ref{asymptotic}. The results
above apply directly to the sum $\sum_{k=1}^{m}g_{k}(y)a_{k}(y)$,
which will asymptotically follow the dominant term with $k=K$. I.e.,
\[
\sum_{k=1}^{m}g_{k}(y)a_{k}(y)\sim g_{K}(y)a_{K}(y).
\]
In particular, for the full density we get
\[
g(y)=\sum_{i=1}^{m}g_{i}(y)\sim g_{K}(y).
\]
Similarly, if $k\neq K$,
\begin{equation}
p_{k}^{*}(y)a_{k}(y)=\frac{g_{k}(y)a_{k}(y)}{g(y)}\to0,\label{eq:k.neq.K}
\end{equation}
and
\[
p_{K}^{*}(y)a_{K}(y)\sim a_{K}(y).
\]

\paragraph{Conditional mean $\mu(y)$}

Applying the above results to $\mu$, we obtain
\[
\mu(y)=\sum_{k=1}^{m}p_{k}^{*}(y)\mu_{k}(y)\sim\mu_{K}(y)\sim\rho_{K}\cdot y.
\]
Furthermore, letting $a_{k}(y)=\rho_{k}+\left(\mu_{k}(y)-\mu(y)\right)d_{k}(y)$,
we get
\[
\beta(y)=\sum_{k=1}^{m}p_{k}^{*}(y)a_{k}(y)\sim a_{K}(y).
\]
However, by~\ref{eq:k.neq.K}, 
\[
\left(\mu_{K}(y)-\mu(y)\right)d_{K}(y) =\sum_{k=1}^{m}p_{k}^{*}(y)(\mu_{K}(y)-\mu_{k}(y))d_{K}(y)\to 0
\]
since the $K$'th term vanishes. It follows that
\[
\beta(y)\sim a_{K}(y)\to\rho_{K}.
\]

\paragraph{Conditional variance $\sigma^{2}(y)$}

For the conditional variance, 
\begin{align*}
\sigma^{2}(y) & =\sum_{k=1}^{m}p_{k}^{*}(y)\left[\sigma_{k}^{2}(1-\rho_{k}^{2})+\left[\mu_{k}(y)-\mu(y)\right]^{2}\right]\\
 & \sim\sigma_{K}^{2}(1-\rho_{K}^{2}).
\end{align*}

\paragraph{Correlation curve $\rho(y)$}

Finally, the result for the correlation curve $\rho(y)$ follows directly
from the results for $\sigma^{2}(y)$ and $\beta(y)$.

\end{document}